\documentclass[letterpaper,10pt]{article}
\usepackage[letterpaper,margin=1in]{geometry}
\usepackage{url}

\title{Behavioral Mapping of Hierarchical Dynamical Models Accelerates Bayesian Parameter Inference}
\author{Jingsong Bai and Ayush Pandey%
\thanks{The authors are with the Department of Electrical Engineering and Computer Science,
University of California, Merced. Correspondence should be addressed to
Ayush Pandey (\texttt{ayushpandey@ucmerced.edu}).}%
}
\date{}
\usepackage{mathtools,amssymb,amsfonts}
\usepackage{mathrsfs}
\usepackage{algorithm}
\usepackage{algorithmic}
\usepackage{booktabs}
\usepackage{graphicx}
\usepackage{array}
\usepackage{textcomp}
\usepackage{xcolor}
\usepackage{float}
\usepackage{placeins}
\usepackage{arydshln}
\usepackage{cite}
\usepackage{accents}
\usepackage{relsize}

\def\BibTeX{{\rm B\kern-.05em{\sc i\kern-.025em b}\kern-.08em
    T\kern-.1667em\lower.7ex\hbox{E}\kern-.125emX}}

\newcommand{\norm}[1]{\left\lVert#1\right\rVert}

\usepackage{amsthm}

\newtheorem{theorem}{Theorem}
\newtheorem{corollary}{Corollary}[theorem]

\newtheorem{proposition}{Proposition}

\theoremstyle{definition}
\newtheorem{definition}{Definition}
 
\theoremstyle{remark}

\theoremstyle{remark}

\newcommand{\GeneMCMCTimeReductionPct}{15.21}

\newcommand{\GeneBaselineMinutes}{316.91}
\newcommand{\GeneGuidedMinutes}{268.71}
\newcommand{\GeneReducedMinutes}{6.64}
\newcommand{\GenePhenomenologicalMinutes}{92.38}
\newcommand{\GeneGPTrainingMinutes}{4.46}
\newcommand{\GeneReducedRMSE}{0.255}
\newcommand{\GenePhenomenologicalRMSE}{0.272}
\newcommand{\GeneBaselineRMSE}{0.327}
\newcommand{\GeneGuidedRMSE}{0.309}
\newcommand{\GeneReducedWalkers}{64}
\newcommand{\GeneReducedSteps}{10000}
\newcommand{\GenePhenomenologicalWalkers}{256}
\newcommand{\GenePhenomenologicalSteps}{40000}
\newcommand{\GeneDetailedWalkers}{256}
\newcommand{\GeneDetailedSteps}{40000}
\newcommand{\GeneDetailedBurn}{5000}

\newcommand{\GeneGPBalancedErrorPct}{0.94}
\newcommand{\GeneGPStrictBalancedErrorPct}{1.89}
\newcommand{\GeneRandomBalancedErrorPct}{2.57}
\newcommand{\GeneGPCompatibleRetainedPct}{96.23}
\newcommand{\GeneGPIncompatibleRejectedPct}{100.00}
\newcommand{\GeneGPProposalRejectedPct}{12.69}
\newcommand{\GeneGPScreeningMinutes}{1.79}
\newcommand{\GeneDetailedPairFirst}{$k_{tl}$}
\newcommand{\GeneDetailedPairSecond}{$k_{dil}$}
\newcommand{\GeneReducedPairFirst}{$k_{tl}$}
\newcommand{\GeneReducedPairSecond}{$k_{dil}$}

\newcommand{\GeneHighLevelFit}{\includegraphics[width=\linewidth]{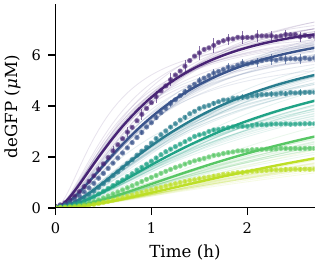}}
\newcommand{\GenePhenomenologicalFit}{\includegraphics[width=\linewidth]{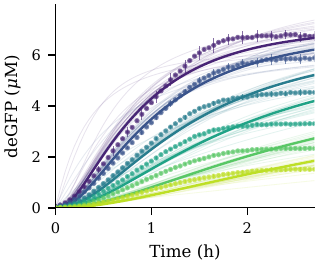}}
\newcommand{\GeneDetailedFit}{\includegraphics[width=\linewidth]{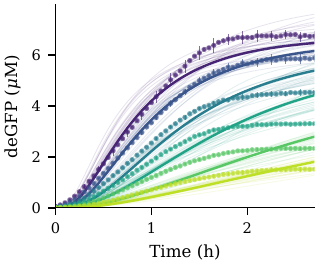}}
\newcommand{\GeneHierarchyLegend}{\includegraphics[width=\linewidth]{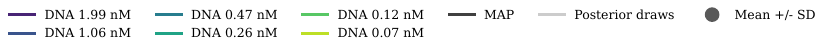}}
\newcommand{\GeneParameterDependence}{\includegraphics[width=0.8\linewidth]{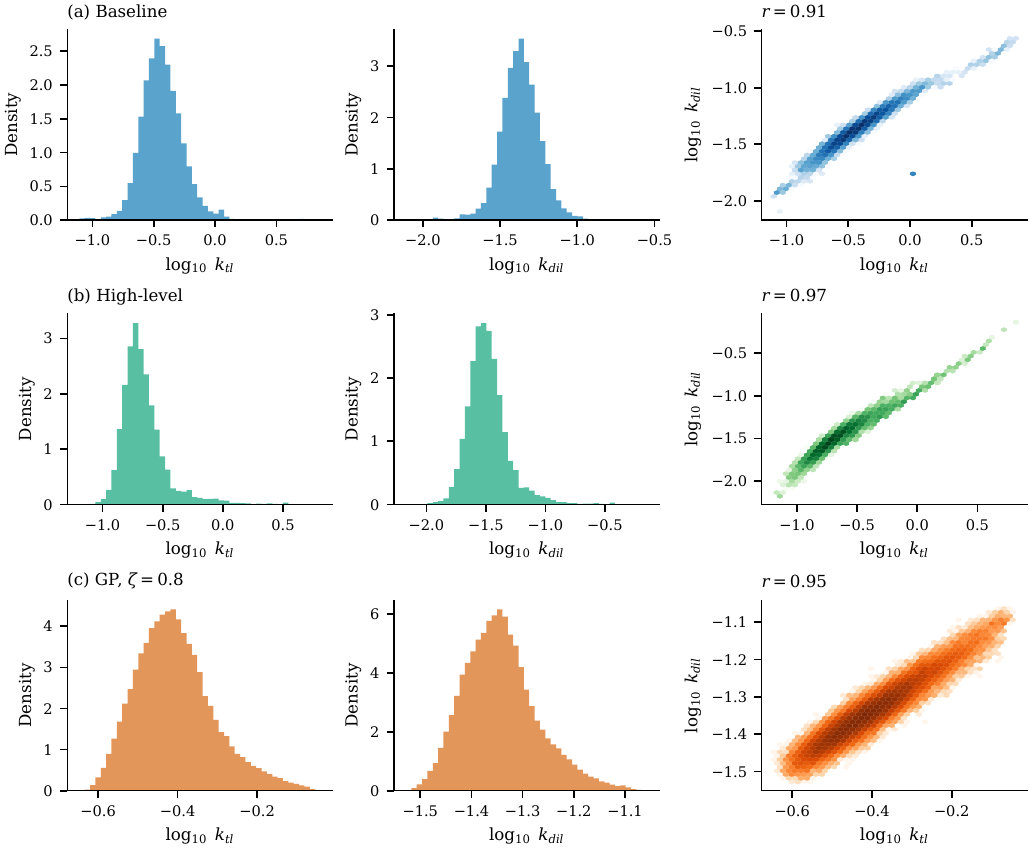}}
\newcommand{\GeneGPTrainingScreening}{\includegraphics[width=0.8\linewidth]{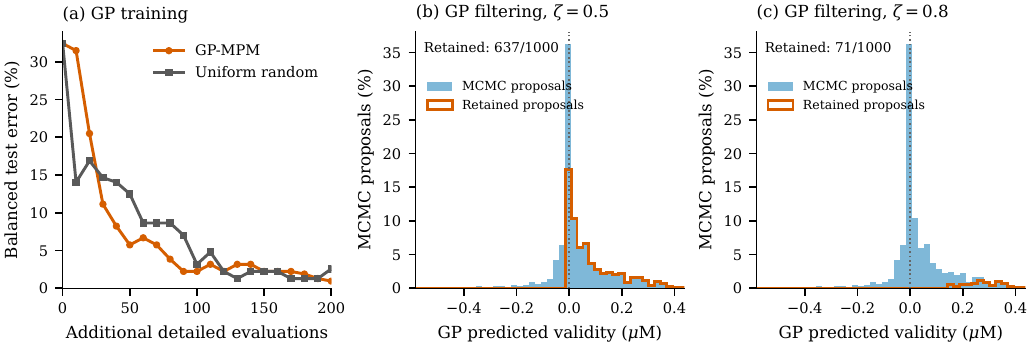}}
\newcommand{\GeneBaselineGuidedFits}{\includegraphics[width=\linewidth]{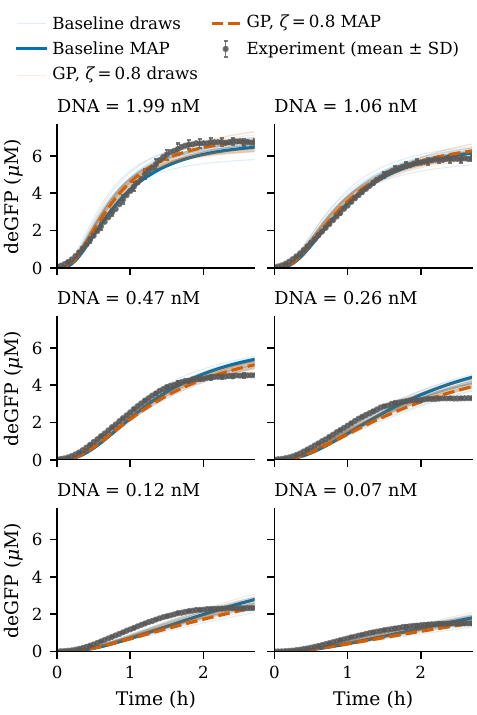}}

\newcommand{\GeneHPMCMCTimeReductionPct}{8.02}

\newcommand{\GeneHPBaselineMinutes}{92.38}
\newcommand{\GeneHPGuidedMinutes}{84.98}
\newcommand{\GeneHPReducedMinutes}{6.64}
\newcommand{\GeneHPGPTrainingMinutes}{2.94}

\newcommand{\GeneHPBaselineRMSE}{0.272}
\newcommand{\GeneHPGuidedRMSE}{0.361}

\newcommand{\GeneHPGPBalancedErrorPct}{8.22}
\newcommand{\GeneHPGPStrictBalancedErrorPct}{26.46}
\newcommand{\GeneHPRandomBalancedErrorPct}{12.30}
\newcommand{\GeneHPGPCompatibleRetainedPct}{47.73}
\newcommand{\GeneHPGPIncompatibleRejectedPct}{99.36}
\newcommand{\GeneHPGPProposalRejectedPct}{31.66}
\newcommand{\GeneHPGPScreeningMinutes}{2.01}

\newcommand{\GeneHPGPTrainingScreening}{\includegraphics[width=0.8\linewidth]{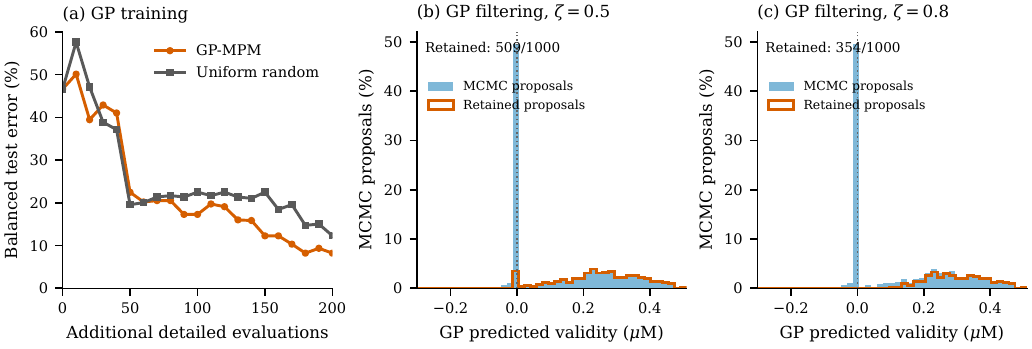}}
\newcommand{\GeneHPBaselineGuidedFits}{\includegraphics[width=\linewidth]{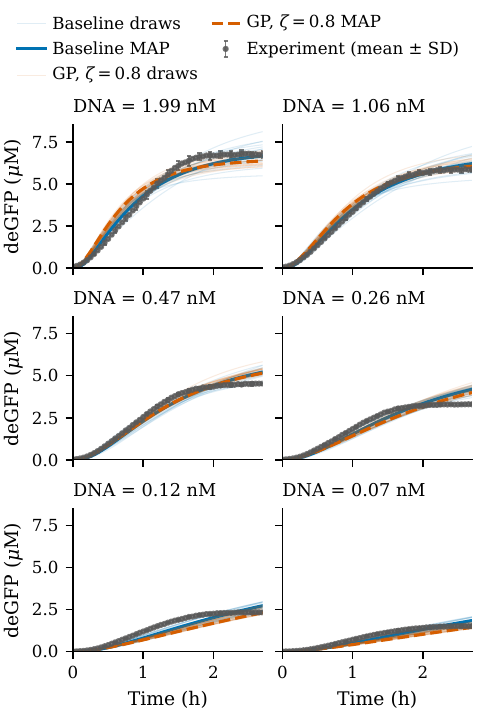}}
\newcommand{\GeneHierarchyMarginals}{\includegraphics[width=\linewidth]{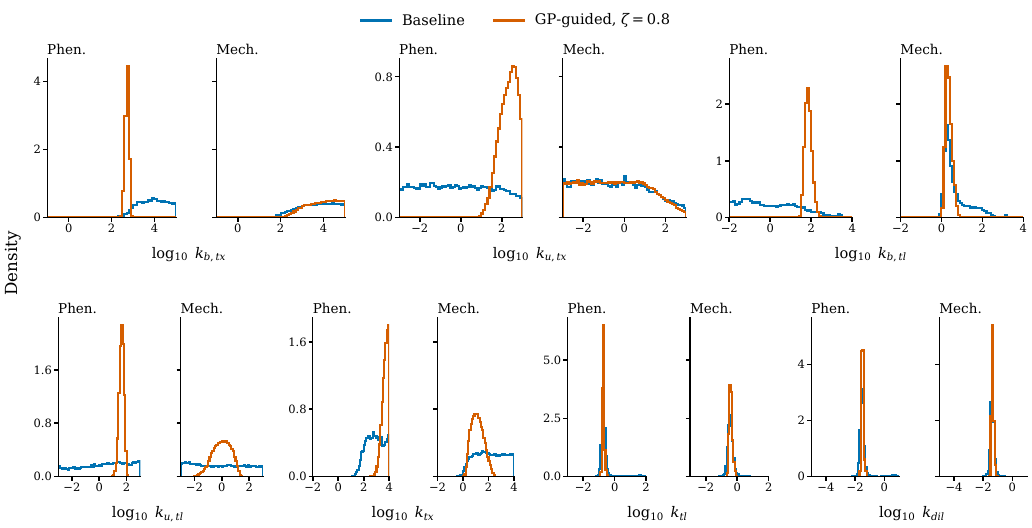}}

\newcommand{\GeneHighLevelCRN}{%
\[
\begin{aligned}
\varnothing &\xrightarrow{v_{\text{tx}}} \text{T}, &
\text{T} &\xrightarrow{v_{\text{deg}}}\varnothing,\\
\text{T} &\xrightarrow{v_{\text{tl}}} \text{T}+\text{X}, &
\text{X} &\xrightarrow{k_{\text{dil}}}\varnothing.
\end{aligned}
\]
\smallskip
{\centering AutoReduce dynamics\par}
\[
\begin{aligned}
\dot{\text{T}} &= k_{\text{tx}}\text{P}_{\text{tot}}\frac{\text{G}}{\text{G}+K_1}\\
       &\quad-k_{\text{deg}}\text{E}_{\text{tot}}\frac{\text{T}}{K_2+\text{T}},\\
\dot{\text{X}} &= k_{\text{tl}}\text{R}_{\text{tot}}\frac{\text{T}}{K_0+\text{T}}-k_{\text{dil}}\text{X}.
\end{aligned}
\]}

\newcommand{\GenePhenomenologicalCRN}{%
\[
\begin{aligned}
\text{G}+\text{P} &\mathrel{\mathop{\rightleftharpoons}^{k_{\text{b},\text{tx}}}_{k_{\text{u},\text{tx}}}} \text{C}_{\text{tx}}
\xrightarrow{k_{\text{tx}}}\text{G}+\text{P}+\text{T},\\[2pt]
\text{T}+\text{R} &\mathrel{\mathop{\rightleftharpoons}^{k_{\text{b},\text{tl}}}_{k_{\text{u},\text{tl}}}} \text{C}_{\text{tl}}
\xrightarrow{k_{\text{tl}}}\text{T}+\text{R}+\text{X},\\[2pt]
\text{T}+\text{E} &\mathrel{\mathop{\rightleftharpoons}^{k_{\text{b},\text{deg}}}_{k_{\text{u},\text{deg}}}} \text{C}_{\text{deg}}
\xrightarrow{k_{\text{deg}}}\text{E},\\[2pt]
\text{X} &\xrightarrow{k_{\text{dil}}}\varnothing.
\end{aligned}
\]}

\newcommand{\GeneDetailedCRN}{%
\[
\begin{aligned}
\text{S}+\text{A} &\mathrel{\mathop{\rightleftharpoons}^{k_{\text{b},r}}_{k_{\text{u},r}}} \text{C}^{\text{c}}_{1,r},\\
\text{C}^{\text{o}}_{n,r}+\text{A} &\mathrel{\mathop{\rightleftharpoons}^{k_{\text{b},r}}_{k_{\text{u},r}}} \text{C}^{\text{c}}_{n+1,r},\\
\text{C}^{\text{c}}_{n,r} &\xrightarrow{k_{\text{iso},r}} \text{C}^{\text{o}}_{n,r},\\
\text{C}^{\text{o}}_{n,r} &\xrightarrow{k_r}\text{S}+n\text{A}+n\text{V},\\
\text{C}^{\text{c}}_{n,r} &\xrightarrow{k_r}\text{C}^{\text{c}}_{1,r}+(n-1)(\text{A}+\text{V}),\\
\text{X}_{\text{u}} &\xrightarrow{k_{\text{mat}}}\text{X}.
\end{aligned}
\]}

\newcommand{\GeneHierarchyPanel}[1]{%
\begin{minipage}[t]{\linewidth}
\vspace{0pt}\fontsize{8}{10}\selectfont
\setlength{\abovedisplayskip}{3pt}
\setlength{\belowdisplayskip}{3pt}
\setlength{\abovedisplayshortskip}{3pt}
\setlength{\belowdisplayshortskip}{3pt}
#1
\end{minipage}}

\newcommand{\GeneHighLevelMetrics}{%
\begin{tabular}{@{}lr@{}}
MCMC time (min) & \GeneReducedMinutes\\
Mean MAP RMSE ($\mu$M) & \GeneReducedRMSE\\
Walkers $\times$ steps & \GeneReducedWalkers$\times$\GeneReducedSteps
\end{tabular}}
\newcommand{\GenePhenomenologicalMetrics}{%
\begin{tabular}{@{}lr@{}}
MCMC time (min) & \GenePhenomenologicalMinutes\\
Mean MAP RMSE ($\mu$M) & \GenePhenomenologicalRMSE\\
Walkers $\times$ steps & \GenePhenomenologicalWalkers$\times$\GenePhenomenologicalSteps
\end{tabular}}
\newcommand{\GeneDetailedMetrics}{%
\begin{tabular}{@{}lr@{}}
MCMC time (min) & \GeneBaselineMinutes\\
Mean MAP RMSE ($\mu$M) & \GeneBaselineRMSE\\
Walkers $\times$ steps & \GeneDetailedWalkers$\times$\GeneDetailedSteps
\end{tabular}}

\begin{document}
\maketitle
\pagestyle{plain}

\begin{abstract}
Bayesian parameter inference for complex nonlinear dynamical models is often limited by the cost of repeated model simulation. We present a behavioral mapping framework that uses the inferred posterior of a higher-level model to accelerate inference of a lower-level model without requiring state or parameter correspondence between them. Building on approximate simulation relations for hierarchically related models, we define the notion of behavioral compatibility and quantify the behavioral support of detailed model parameters using the reduced model posterior. We learn a level set of behavioral support using a Gaussian process and use it to filter detailed model parameter proposals before expensive simulations during Markov Chain Monte Carlo (MCMC) inference. We derive conditions under which the behavioral support level set retains all detailed model parameters that fit the experimental data within a prescribed tolerance. We apply this method to a hierarchy of gene-expression models fitted to cell-free experimental data. After initial setup costs, our approach speeds up MCMC by \GeneHPMCMCTimeReductionPct\% for the phenomenological model and \GeneMCMCTimeReductionPct\% for the mechanistic model.
\end{abstract}

\section{Introduction}
To use a dynamical model for design or analysis of a physical system, it is necessary to first identify the parameters of the model from experimental data so that the model can accurately predict the system behavior. In practice, parameter inference is often computationally expensive to the extent that even for moderately sized models, it can take anywhere between days to weeks to obtain reasonable estimations or probability distributions of parameter values. This is especially true for nonlinear ordinary differential equation (ODE) models that are commonly used in the modeling of biological, chemical, and physical systems. These models usually do not admit closed-form solutions so analytical methods are not available for parameter estimation, the optimization problems are non-convex, and the parameters are often highly correlated leading to non-identifiability. So, parameter inference methods must rely on full simulations of the model. A commonly used parameter inference method for nonlinear ODEs is Bayesian inference, which estimates the posterior probability distribution of the parameters given the data from observations of the system. Approaches like Approximate Bayesian Computation (ABC), Sequential Monte Carlo (SMC), and Markov Chain Monte Carlo (MCMC) are used to sample from the posterior distribution of the parameters. With these methods, the computational cost of parameter inference often increases exponentially with the number of parameters in the model. Therefore, there is a severe limitation on the size of the models that can be used for parameter inference, and thus, on our ability to use models accurately to predict complex system behavior. To address this limitation, in this paper, we propose a new approach that learns information about system behavior from hierarchically related reduced models to accelerate the parameter inference of the full model. 

% epresentative modeling hierarchy:
%         behavioral specification
%         phenomenological dynamical model
%         detailed mechanistic model

\subsection{Related Work}
Our approach is at the intersection of several recent developments in reduced-order modeling, model validation, Bayesian methods, and uncertainty quantification.
% We discuss this related literature briefly to appropriately motivate our approach.  
% Approximate simulation and bisimulation relations have been used to formally quantify how closely the external behavior of one dynamical system can reproduce another, particularly for abstraction, model reduction, and verification~\cite{girard2007approximation}. Their use in parameter inference, however, has been limited, especially for transferring information between models with different parameterizations. Our work is in line with 
Prior work has shown that low-fidelity models can be used to accelerate parameter inference, uncertainty propagation, or optimization, while high-fidelity models provide accuracy or convergence control~\cite{peherstorfer2018survey}. Specifically for Bayesian inference, delayed acceptance is an approach that has been used to evaluate the computationally expensive posterior target only after a proposal has passed a first-stage acceptance criterion~\cite{christen2005mcmc}. Transport maps~\cite{peherstorfer2019transport} from low-fidelity posteriors have also been used to propose samples for a high-fidelity posterior. On similar lines, projection-based reduced-order modeling has been adaptively used~\cite{cui2015datadriven} during posterior exploration to couple the reduced and full models within the MCMC routine. These approaches implicitly require the same parameter coordinates or a correspondence between the low-fidelity and high-fidelity model parameters, thus limiting their applicability to cases where such a correspondence exists or can be easily derived. Recently, such approaches have also been extended to support arbitrary model hierarchies~\cite{lykkegaard2023mlda,zacchei2026mfda}, while still requiring related parameterizations or a known correspondence between model levels. 

Methods that do not depend on parameter mapping either use a statistical response relation between models to correct the likelihood computation in gradient-based inference~\cite{nitzler2026bmfia} or adaptively learn a polynomial or Gaussian Process (GP) approximation of the target forward model locally~\cite{conrad2016local,li2014adaptive} within MCMC inference. These approaches use detailed model simulations to learn either the cross-fidelity response relation or the local approximation of the detailed model, which can itself constitute a substantial computational bottleneck.
From an algorithmic perspective, simulated-to-observed data discrepancy has been modeled with a GP in ABC to reject proposals before expensive simulation~\cite{cao2025early}. This is algorithmically similar to our use of GP filtering, but the GP models target model-to-data discrepancy, whereas our GP learns a behavioral support boundary induced by the posterior of a higher-level model and is used to filter proposals in the MCMC sampling for the lower-level detailed model.
Although reduced-order models have been widely used to support and speed up parameter inference their use typically relies on an explicit mapping or otherwise imposed structural constraints. Here, we formalize how information from the posterior distribution of an inferred model at a higher level of abstraction can act as behavioral support over the parameter space of the detailed model at a lower level of abstraction, even when no parameter or state correspondence between the models is available. This information is transferred through the posterior probability of finite-horizon output compatibility. For this, we use approximate simulation relations to certify the finite-horizon compatibility tolerance between models at different levels of the hierarchy~\cite{girard2007approximation}, and a recently developed GP minimum posterior misclassification (GP-MPM) method~\cite{moorse2024validation} to efficiently learn the induced parameter-space support level set. This learned level set is then used to filter MCMC proposals before expensive model simulation, thereby accelerating inference.

\subsection{Contributions}
The main contributions of this paper are summarized below:
\begin{enumerate}
    \item We present a computationally efficient framework for defining and learning the behavioral compatibility between hierarchically related models, without requiring state or parameter mapping between the models. 
    \item We construct a behavioral support level set on the parameter space of a model and derive conditions under which this set retains the desired data-fitting region, enabling computationally efficient MCMC inference.
    \item For nonlinear gene expression, we accelerate MCMC by \GeneHPMCMCTimeReductionPct\% for a nine-state phenomenological model and \GeneMCMCTimeReductionPct\% for a 38-state mechanistic model after initial setup costs, and compare the inferred parameter distributions and GFP fits with baseline inference.
\end{enumerate}

\section{Preliminaries}
Let $k$ denote the common dimension of the measured outputs of the detailed
and reduced models. In this paper, we consider the Hilbert space
\(
    \mathcal{H}_T
    :=
    L_2([0,T];\mathbb{R}^k)
\)
over a finite time horizon $[0,T]$ with norm
\[
    \norm{y(t)}_{\mathcal{H}_T} = 
    \left(
        \int_0^T \|y(t)\|_2^2\,dt
    \right)^{1/2}.
\]
For brevity, we will drop the $T$ notation in what follows. The induced finite-horizon distance metric we use is
\(
    d(y_1,y_2)
    :=
    \|y_1-y_2\|_{\mathcal{H}}.
\)
More generally, the results above apply to any finite metric on a common measured output space, provided the parameterized discrepancies are measurable. Below, we use the finite-horizon \(L_2\) metric.
We consider a hierarchy of models that are ordered by complexity --- high-level abstractions capture the system-level specification behavior while low-level models capture the detailed dynamics. For each pair in the hierarchy, we have a detailed model and a reduced-order model. We do not assume any ordering for parameters or \textit{a priori} knowledge of the relationship between them. For both levels, in the most general case, we consider well-posed nonlinear parameterized models: \begin{align}
\dot{x} = f(x, u, \theta), \: y = Cx \label{eq:nl-sys}\\ \dot{\hat{x}} = \hat{f}(\hat{x}, u, \hat{\theta}),\: \hat y = \hat C \hat x 
\label{eq:nl-sys-reduced}
\end{align} 
where $x \in \mathbb{R}^n, \hat x \in \mathbb{R}^{\hat n}$ are the states, $u \in \mathbb{R}^m$ are the inputs, and $\theta \in \Theta\subseteq\mathbb{R}^p, \hat\theta \in \widehat\Theta\subseteq\mathbb{R}^{\hat p}$ are the parameters. Both models describe the same measured outputs, with $y,\hat y\in\mathbb{R}^{k}$, $C\in\mathbb{R}^{k\times n}$, and $\hat C\in\mathbb{R}^{k\times\hat n}$. The outputs depend linearly on the states. The $\hat{\cdot}$ counterparts represent the reduced-order model equivalents. We write the parameter dependence explicitly as $y(\cdot;\theta)$.

\begin{definition}[Directed Hausdorff distance]
    Let
\begin{equation}
    \mathcal{Y}
    :=
    \{y(\cdot;\theta):\theta\in\Theta\},
    \qquad
    \widehat{\mathcal{Y}}
    :=
    \{\hat y(\cdot;\hat\theta):\hat\theta\in\widehat\Theta\}.
    \label{eq:behavior-sets}
\end{equation}
Then, the directed Hausdorff distance from the detailed model behavior set to the
reduced model behavior set is
\begin{equation}
    d_H(\mathcal{Y},\widehat{\mathcal{Y}})
    :=
    \sup_{y\in\mathcal{Y}}
    \inf_{\hat y\in\widehat{\mathcal{Y}}}
    d(y,\hat y).
    \label{eq:hausdorff-distance}
\end{equation}
Equivalently,
\[
    d_H(\mathcal{Y},\widehat{\mathcal{Y}})
    =
    \sup_{\theta\in\Theta}
    \inf_{\hat\theta\in\widehat\Theta}
    d\!\left(
        y(\cdot;\theta),
        \hat y(\cdot;\hat\theta)
    \right).
\]
\end{definition}
% This quantity measures the worst-case finite-horizon error with which the
% reduced model family can represent a detailed-model behavior.
For linear autonomous systems ($\dot x = Ax$, $y=Cx$ and $\dot{\hat x}=\hat A\hat x$, $\hat y=\hat C\hat x$), we note the following finite-horizon observability Gramians
\[
    Q_T
    :=
    \int_0^T
    e^{A^\top t}C^\top C e^{At}\,dt, \: 
    \hat Q_T
    :=
    \int_0^T
    e^{\hat A^\top t}\hat C^\top \hat C e^{\hat A t}\,dt,
\]
and the cross-term Gramian
\[
    H_T
    :=
    \int_0^T
    e^{A^\top t}C^\top \hat C e^{\hat A t}\,dt.
\]
For the specialized linear analysis below, we assume that the reduced pair
$(\hat A,\hat C)$ is observable over $[0,T]$. Consequently,
$\hat Q_T\succ0$ and $\hat Q_T$ is invertible.
\begin{definition}[Approximate simulation~\cite{girard2007approximation}]
A relation $S_\epsilon \subset \mathbb{R}^{n}\times\mathbb{R}^{\hat n}$ is an approximate simulation relation of the detailed
system by the reduced system over $[0,T]$ if three conditions hold: (1) Bounded observations for every $(x,\hat x)\in S_\epsilon$,
    \[
        \|y-\hat y\|_2
        \le
        \frac{\epsilon}{\sqrt{T}}.
    \]
(2) Transition matching: For every $(x,\hat x)\in S_\epsilon$, every transition of the detailed
    state from $x$ can be matched by a transition of the reduced state from
    $\hat x$ to a new pair that remains in $S_\epsilon$, and (3) Coverage of the initial condition set which says that every detailed initial condition under consideration has at least one
    related reduced initial condition. Note that we denote the desired finite-horizon compatibility tolerance as $\epsilon$ and we use the corresponding instantaneous output precision $\epsilon/\sqrt{T}$. For deterministic continuous-time flows, transition matching implies that a
related initial-state pair remains related over the experiment horizon.
\end{definition}

\subsection{Experimental data and desired fit tolerance}

Let $\bar y\in\mathcal{H}_T$ denote the measured output trajectory and $d(\cdot,\cdot)$ denote the metric induced by the norm on $\mathcal{H}$. We denote $\eta>0$ as the desired model-data fit tolerance. We define the set
\begin{equation}
    F_\eta
    :=
    \left\{
        \theta\in\Theta:
        d\!\left(
            y(\cdot;\theta),\bar y
        \right)
        \le
        \eta
    \right\}
\label{eq:data-fit-set}
\end{equation}
for the parameter values of the detailed model that produce outputs that fit the measured data within the tolerance. Similarly, for the model of reduced complexity, for $s\ge0$, we define $\widehat F_s$ as the set of $\hat \theta \in \hat{\Theta}$ such that $d\!\left(\hat y(\cdot;\hat\theta),\bar y\right)\le s$.
\subsection{Bayesian inference and MCMC notation}
We denote $\pi_0(\theta)$ as a prior density on model parameters $\theta$ and $\pi(\theta\mid\bar y)$ as the posterior density. Additionally, we define the posterior probability mass within $F_s$ as
\[
    m_{\pi}(\eta)
    :=
    \int_{F_\eta}
    \pi(\theta\mid\bar y)\,d\theta.
\]

For a scalar function $g$ evaluated along a stationary chain, its
integrated autocorrelation time is denoted as $\tau_g$, which determines the effective sample size
approximately as
\(
    N_{\mathrm{eff}}
    \approx
    \frac{N}{\tau_g}.
\)
For implementations, we use an ensemble MCMC process with $N_w$ walkers and $N_s$ steps.

\section{Problem Formulation}
We aim to obtain $\pi(\theta\mid\bar y)$, the posterior distribution of the detailed model, in a computationally efficient manner. The key idea is to use the information from the learned posterior of a hierarchically related reduced model to speed up the inference of the higher-dimensional model (lower level in hierarchy). 
Note that the only ordering imposed on the hierarchical relations between models is that the higher level abstraction model in the pair chosen from the hierarchy has fewer states, that is,
\(
    \hat n<n.
\). In addition to this, we do not assume an ordering on $p$ and $\hat p$. This allows us to consider reduced models that contain lumped parameters and might lead to more parameters. Importantly, no mapping between $\theta$ and $\hat\theta$ is assumed. To formulate the problem of inference speedup, we define behavioral compatibility between two models in the hierarchy.
\begin{definition}[Behavioral compatibility]
For $\epsilon>0$, define
\[
    R_\epsilon
    :=
    \left\{
        (\theta,\hat\theta)\in\Theta\times\widehat\Theta:
        d\!\left(
            y(\cdot;\theta),
            \hat y(\cdot;\hat\theta)
        \right)
        \le
        \epsilon
    \right\}.
\]
For fixed $\theta$, define
\[
    R_\epsilon(\theta)
    :=
    \left\{
        \hat\theta\in\widehat\Theta:
        (\theta,\hat\theta)\in R_\epsilon
    \right\}.
\]
\end{definition}
The set $R_\epsilon(\theta)$ is the measurable set of reduced model parameters that are behaviorally compatible with the detailed model parameter $\theta$. Integrating the posterior mass over this set gives us a measure of the behavioral support.

We assume that, for every fixed $\theta\in\Theta$, the following map is Borel measurable:
\[
\hat\theta \mapsto
d\!\left(y(\cdot;\theta),\hat y(\cdot;\hat\theta)\right).
\]
\begin{definition}[Behavioral support]
The behavioral support assigned by the reduced model posterior to a detailed model parameter $\theta$ is
\[
    b_\epsilon(\theta)
    :=
    \int_{R_\epsilon(\theta)}
    \hat\pi(\hat\theta\mid\bar y)\,d\hat\theta.
\]
Equivalently, it is expressed as a probability measure
\[
    b_\epsilon(\theta)
    =
    \Pr_{\hat\theta\sim\hat\pi(\cdot\mid\bar y)}
    \left[
        d\!\left(
            y(\cdot;\theta),
            \hat y(\cdot;\hat\theta)
        \right)
        \le
        \epsilon
    \right].
\]
% \(
%     0\le b_\epsilon(\theta)\le 1.
% \)
\end{definition}
% \subsection{Behaviorally retained detailed-parameter set}
To formalize the notion of how much of the detailed model parameter space is retained based on behavioral compatibility, we introduce a behavioral support level $\gamma\in(0,1)$ and define the set
\[
    K_{\epsilon,\gamma}
    :=
    \left\{
        \theta\in\Theta:
        b_\epsilon(\theta)\ge\gamma
    \right\}.
\]
When $b_\epsilon$ is continuous, the boundary of $K_{\epsilon,\gamma}$, relative to $\Theta$, is contained in the level set
\(
    b_\epsilon(\theta)=\gamma,
\)
which we aim to learn for filtering detailed-model proposals.
To implement this idea, we propose a data-dependent behavioral prior.
\begin{proposition}
\label{prop:behavioral-prior}
Let $\pi_0(\theta)$ be the baseline prior. If
\[
    0<
    \int_\Theta
    \pi_0(\vartheta)b_\epsilon(\vartheta)\,d\vartheta
    <
    \infty,
\]
then, the following probability density function defines a proper data-dependent behavioral prior on $\Theta$:
\[
    \pi_B(\theta\mid\bar y)
    :=
    \frac{
        \pi_0(\theta)b_\epsilon(\theta)
    }{
        \int_\Theta
        \pi_0(\vartheta)b_\epsilon(\vartheta)\,d\vartheta
    }
\]
Thresholding the behavioral support instead gives the hard-restricted
posterior
\[
    \pi_{B,\gamma}(\theta\mid\bar y)
    \propto
    L(\bar y\mid\theta)\pi_0(\theta)
    \mathbf{1}_{K_{\epsilon,\gamma}}(\theta).
\]
This is an intentionally restricted posterior: its support is limited to
$K_{\epsilon,\gamma}$ rather than preserving the support of the baseline
posterior.
\end{proposition}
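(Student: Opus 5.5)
The plan is to verify directly the two defining properties of a probability density for $\pi_B(\cdot\mid\bar y)$: it is measurable and nonnegative, and it integrates to one. Then I will read off the hard-restricted posterior as a normalized product. The key facts are $0\le b_\epsilon(\theta)\le 1$ and measurability of $\theta\mapsto b_\epsilon(\theta)$.

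The bound follows immediately from the second form in the definition of behavioral support. Since $b_\epsilon(\theta)$ is the probability, under the reduced posterior $\hat\pi(\cdot\mid\bar y)$, of the event $\{\hat\theta: d(y(\cdot;\theta),\hat y(\cdot;\hat\theta))\le\epsilon\}$, we get $b_\epsilon(\theta)\in[0,1]$. The assumed Borel measurability of $\hat\theta\mapsto d(y(\cdot;\theta),\hat y(\cdot;\hat\theta))$ guarantees that $R_\epsilon(\theta)$ is measurable, so the integral is well defined.

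\textbf{Measurability in $\theta$.} This step is the main obstacle, since the paper only assumes measurability in $\hat\theta$ for each fixed $\theta$. I would first argue joint measurability of $(\theta,\hat\theta)\mapsto d(y(\cdot;\theta),\hat y(\cdot;\hat\theta))$. Well-posedness of the models \eqref{eq:nl-sys}--\eqref{eq:nl-sys-reduced} gives continuity of $\theta\mapsto y(\cdot;\theta)$ into $\mathcal{H}$, and the analogous continuity holds for $\hat y$. Hence the discrepancy is continuous, and therefore jointly Borel. With joint measurability, $R_\epsilon$ is a Borel subset of $\Theta\times\widehat\Theta$. Tonelli's theorem applied to $\mathbf{1}_{R_\epsilon}(\theta,\hat\theta)\,\hat\pi(\hat\theta\mid\bar y)$ then shows that $\theta\mapsto b_\epsilon(\theta)$ is Borel measurable. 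If continuity is not available, I would state joint measurability as the implicit standing assumption.

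\textbf{Properness of $\pi_B$.} Given measurability, $\pi_0 b_\epsilon$ is nonnegative and measurable. Write $Z:=\int_\Theta\pi_0 b_\epsilon\,d\vartheta$. Because $\pi_0$ is a prior density and $b_\epsilon\le1$, we have $Z\le\int\pi_0=1$, so finiteness is automatic; the hypothesis $0<Z<\infty$ is what makes division legitimate. Then $\pi_B=\pi_0 b_\epsilon/Z\ge0$ and $\int_\Theta\pi_B=1$, so $\pi_B$ is a proper density. It is data dependent because $b_\epsilon$ depends on $\bar y$ through $\hat\pi(\cdot\mid\bar y)$.

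\textbf{The restricted posterior.} $K_{\epsilon,\gamma}=b_\epsilon^{-1}([\gamma,1])$ is measurable by the measurability step above, so $\mathbf{1}_{K_{\epsilon,\gamma}}$ is a measurable function. The unnormalized density $L(\bar y\mid\theta)\pi_0(\theta)\mathbf{1}_{K_{\epsilon,\gamma}}(\theta)$ is nonnegative. It is integrable whenever the baseline posterior is proper, since it is bounded by the unnormalized baseline posterior $L\pi_0$. It therefore normalizes to a density provided $\int_{K_{\epsilon,\gamma}}L\pi_0>0$; I would state this positivity as a side condition. This density vanishes off $K_{\epsilon,\gamma}$, which gives the final remark that its support is contained in $K_{\epsilon,\gamma}$ rather than equal to the support of $\pi(\cdot\mid\bar y)$.
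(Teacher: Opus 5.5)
Your proposal is correct and follows the paper's route: both constructions are nonnegative, the hypothesis $0<\int_\Theta\pi_0 b_\epsilon<\infty$ gives a finite positive normalizer for $\pi_B$, and the restricted posterior is proper once $\int_{K_{\epsilon,\gamma}}L\pi_0$ is positive and finite. You add two things the paper leaves implicit: measurability of $\theta\mapsto b_\epsilon(\theta)$, and hence of $K_{\epsilon,\gamma}$, via joint continuity and Tonelli; and the observation that finiteness of the restricted normalizer follows from a proper baseline posterior (though your remark $Z\le 1$ presumes $\pi_0$ is proper, so the paper's explicit finiteness hypothesis is still needed for improper baseline priors).
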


\begin{proof}
Both constructions are nonnegative. The stated normalization condition gives
a finite positive normalizing constant for $\pi_B$. The restricted posterior
is proper whenever
\(
0<\int_{K_{\epsilon,\gamma}}
L(\bar y\mid\vartheta)\pi_0(\vartheta)\,d\vartheta<\infty.
\)
\end{proof}

\section{Results}
% results_theory.tex
% Body-only theory section for insertion into the paper.
%
% Assumes the following have already been defined in the Preliminaries:
%   \mathcal H_T = L_2([0,T];\mathbb R^p)
%   d(y_1,y_2) = \|y_1-y_2\|_{\mathcal H_T}
%   approximate simulation relation S_\epsilon
%   finite-horizon Gramians Q_T, \hat Q_T, H_T
%   baseline and reduced Bayesian posteriors
%
% Suggested bibliography keys used below:
%   girard2007approximation
%   jones2004markov
%   qian2022balancing

\subsection{Nonlinear models}
Consider the two nonlinear system models in~\eqref{eq:nl-sys} - ~\eqref{eq:nl-sys-reduced} in an approximate simulation relation. Note that the initial conditions, $x(0)$, and $\hat x(0)$, are fixed and we assume the usual regularity conditions on $f$ and $\hat f$ required for existence and uniqueness of solutions on $[0,T]$ for all parameters. Using the approximate simulation relation, we propose how the behavioral compatibility tolerance $\epsilon$ can be chosen.
\begin{proposition}
\label{prop:approx-sim-certificate}
For every $\theta\in\Theta$, assume there exists a fixed $\hat\theta\in\widehat\Theta$ such that the prescribed initial pair $(x(0),\hat x(0))$ belongs to $S_\epsilon$, and transition matching holds for the corresponding parameterized dynamics throughout $[0,T]$.
For an approximate simulation relation between systems~\eqref{eq:nl-sys} and~\eqref{eq:nl-sys-reduced}, $S_\epsilon$, with observation precision $\epsilon/\sqrt{T}$, the choice of the behavioral compatibility tolerance $\epsilon$ is an upper bound on the directed Hausdorff distance between the detailed and reduced model behavior sets:
\[
    d_H(\mathcal Y,\widehat{\mathcal Y})
    \le
    \epsilon.
\]
\end{proposition}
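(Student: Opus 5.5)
The plan is to fix an arbitrary detailed parameter $\theta\in\Theta$, exhibit one reduced trajectory within distance $\epsilon$ of $y(\cdot;\theta)$, and then take the supremum over $\theta$. By hypothesis there is a fixed $\hat\theta\in\widehat\Theta$ (depending on $\theta$) with $(x(0),\hat x(0))\in S_\epsilon$. Transition matching holds for the dynamics parameterized by $\theta$ and $\hat\theta$ throughout $[0,T]$. We work with the unique solutions $x(t;\theta)$ and $\hat x(t;\hat\theta)$ on $[0,T]$, which exist by the standing regularity assumptions.

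\emph{Step 1 (invariance).} Using transition matching, as noted after the definition of approximate simulation, we conclude that for deterministic continuous-time flows the related initial pair stays related: $(x(t;\theta),\hat x(t;\hat\theta))\in S_\epsilon$ for all $t\in[0,T]$. This is the step I expect to be the main obstacle. The definition only says that every transition of the detailed state can be matched by some transition of the reduced state. For a deterministic flow with a fixed initial condition, the reduced system has exactly one trajectory, so the matching transition must be the actual flow of $\hat f(\cdot,u,\hat\theta)$. I would make this explicit by taking it as part of the hypothesis that transition matching holds ``for the corresponding parameterized dynamics throughout $[0,T]$.'' That is, the matched reduced trajectory is precisely $\hat x(\cdot;\hat\theta)$, and the pair remains in $S_\epsilon$ at every time. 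If a more careful argument is wanted, I would phrase it as the set of times $\{t:(x(t),\hat x(t))\in S_\epsilon\}$ being all of $[0,T]$, obtained by applying matching from each related pair forward.

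\emph{Step 2 (pointwise bound to $L_2$ bound).} By condition (1) of approximate simulation, for each $t$ we have $\|y(t;\theta)-\hat y(t;\hat\theta)\|_2\le\epsilon/\sqrt T$, since $y=Cx$ and $\hat y=\hat C\hat x$. Squaring and integrating over $[0,T]$ gives
\[
d\bigl(y(\cdot;\theta),\hat y(\cdot;\hat\theta)\bigr)^2=\int_0^T\|y(t;\theta)-\hat y(t;\hat\theta)\|_2^2\,dt\le T\cdot\frac{\epsilon^2}{T}=\epsilon^2 .
\]
Measurability of the integrand follows from continuity of the solutions.

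\emph{Step 3 (Hausdorff bound).} Step 2 gives $\inf_{\hat\vartheta\in\widehat\Theta}d(y(\cdot;\theta),\hat y(\cdot;\hat\vartheta))\le d(y(\cdot;\theta),\hat y(\cdot;\hat\theta))\le\epsilon$. Since $\theta$ was arbitrary, taking the supremum over $\theta\in\Theta$ in the parameter form of the directed Hausdorff distance yields $d_H(\mathcal Y,\widehat{\mathcal Y})\le\epsilon$. The same argument also shows $R_\epsilon(\theta)\neq\varnothing$ for every $\theta$, which justifies using this $\epsilon$ as the behavioral compatibility tolerance.
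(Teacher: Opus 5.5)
Your proposal is correct and follows the paper's proof: for the $\hat\theta$ given by the hypothesis, integrate the pointwise observation bound $\epsilon/\sqrt{T}$ over $[0,T]$ to get $d\le\epsilon$, then take the supremum over $\theta$. The invariance step you flag as the main obstacle is not proved in the paper either; the paper, like you, absorbs it into the hypothesis and into the remark after the approximate simulation definition.
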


\begin{proof}
Since $\|y(t)-\hat y(t)\|_2$ is bounded by the observation precision of the relation for all $t\in[0,T]$, we have 
\[
\begin{aligned}
    d^2(y,\hat y)
    =
    \int_0^T
    \|y(t)-\hat y(t)\|_2^2\,dt
    \le
    \int_0^T
    \frac{\epsilon^2}{T}\,dt
    =
    \epsilon^2.
\end{aligned}
\]
For each $\theta$, choosing the corresponding fixed $\hat\theta$ gives
$d_H(\mathcal Y,\widehat{\mathcal Y})\le\epsilon$.
\end{proof}
With a particular choice of $\epsilon$ satisfying Proposition~\ref{prop:approx-sim-certificate}, we can now prove the existence of a behavioral support for MCMC. 
\begin{theorem}
\label{thm:behavioral-support-existence}
If the higher-level model of reduced complexity admits samples
$\hat\theta^{(1)},\hat\theta^{(2)},\ldots$
from a Markov chain with invariant distribution
$\hat\pi(\cdot\mid\bar y)$ satisfying the Markov chain ergodic theorem for bounded measurable functions, then the behavioral support $b_{\epsilon}(\theta)$ is well defined and can be consistently estimated using the $\hat \theta$ samples as
\[
    \hat b_{\epsilon,M}(\theta)
    :=
    \frac{1}{M}
    \sum_{j=1}^{M}
    \mathbf 1
    \left[
        d\!\left(
            y(\cdot;\theta),
            \hat y(\cdot;\hat\theta^{(j)})
        \right)
        \le
        \epsilon
    \right],
\]
which, for every fixed $\theta$, gives 
\[
    \hat b_{\epsilon,M}(\theta)
    \xrightarrow[M\rightarrow\infty]{\mathrm{a.s.}}
    b_\epsilon(\theta).
\]
\end{theorem}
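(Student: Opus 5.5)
The plan is to fix $\theta\in\Theta$ and reduce the claim to the ergodic theorem applied to a single bounded measurable test function on $\widehat\Theta$. I will define
\[
    g_\theta(\hat\theta)
    :=
    \mathbf 1\!\left[
        d\!\left(y(\cdot;\theta),\hat y(\cdot;\hat\theta)\right)\le\epsilon
    \right]
    =
    \mathbf 1_{R_\epsilon(\theta)}(\hat\theta).
\]
With this notation, $\hat b_{\epsilon,M}(\theta)=\frac1M\sum_{j=1}^M g_\theta(\hat\theta^{(j)})$ and $b_\epsilon(\theta)=\int g_\theta\,\hat\pi(\hat\theta\mid\bar y)\,d\hat\theta$. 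The argument then splits into two steps: well-definedness, and consistency.

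\textbf{Well-definedness.} I will use the standing assumption that $\hat\theta\mapsto d\!\left(y(\cdot;\theta),\hat y(\cdot;\hat\theta)\right)$ is Borel measurable for each fixed $\theta$. Under this assumption, $R_\epsilon(\theta)$ is the preimage of the closed set $[0,\epsilon]$ and is therefore a Borel subset of $\widehat\Theta$. It follows that $g_\theta$ is Borel measurable with $0\le g_\theta\le 1$. Since $\hat\pi(\cdot\mid\bar y)$ is a probability density, the integral defining $b_\epsilon(\theta)$ exists. It lies in $[0,1]$ and equals the stated probability $\Pr_{\hat\theta\sim\hat\pi}[\,\cdot\,]$.

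The well-posedness assumptions on~\eqref{eq:nl-sys}--\eqref{eq:nl-sys-reduced} matter here. They guarantee that $y(\cdot;\theta)$ and $\hat y(\cdot;\hat\theta)$ exist on $[0,T]$ and lie in $\mathcal H$, so the distance $d$ is finite. The choice of $\epsilon$ from Proposition~\ref{prop:approx-sim-certificate} is not needed for this step. It serves only to ensure that $R_\epsilon(\theta)$ is nonempty, so that $b_\epsilon$ is not trivially zero; I will remark on this but not rely on it.

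\textbf{Consistency.} Since $g_\theta$ is bounded and measurable, the hypothesised Markov chain ergodic theorem applies directly. It gives $\frac1M\sum_{j=1}^M g_\theta(\hat\theta^{(j)})\to \mathbb E_{\hat\pi}[g_\theta]=b_\epsilon(\theta)$ almost surely. I will state which initial distribution is covered: either $\hat\pi$-almost every starting point, or every starting point if the chain is Harris recurrent. This is whichever version the hypothesis asserts.

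\textbf{Main obstacle.} The only delicate point is keeping the statement pointwise in $\theta$. The almost-sure event depends on $\theta$, so the argument yields convergence for each fixed $\theta$ and not uniformly in $\theta$; this is exactly what the theorem claims. I will add a remark on extending the result. By countable intersection, convergence holds simultaneously on any countable set of $\theta$'s, for example the training points later used for the GP. Uniform convergence would require extra structure, such as a Glivenko--Cantelli-type condition on the class $\{g_\theta\}$, which I will not attempt.
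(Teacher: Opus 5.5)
Your proposal is correct and follows the paper's proof. Like the paper, you fix $\theta$, note that the indicator of $R_\epsilon(\theta)$ is a bounded measurable function of the reduced chain, and invoke the Markov chain ergodic theorem to obtain almost-sure convergence to $\mathbb{E}_{\hat\pi}$ of that indicator, which is $b_\epsilon(\theta)$; your explicit derivation of well-definedness from the standing Borel-measurability assumption, and your remarks on the initial distribution and the pointwise-in-$\theta$ nature of the almost-sure event, are sound details the paper leaves implicit.
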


\begin{proof}
% Since the indicator of $R_\epsilon(\theta)$ is measurable, nonnegative, and bounded
% by one, its integral with respect to the proper reduced posterior exists and lies in $[0,1]$. 
For fixed $\theta$, the indicator used in
$\hat b_{\epsilon,M}(\theta)$ is a bounded measurable function of the
reduced model Markov chain. The Markov chain ergodic theorem therefore gives
\[
    \hat b_{\epsilon,M}(\theta)
    \longrightarrow
    \mathbb E_{\hat\pi}
    \left[
        \mathbf 1
        \left[
            d(y(\theta),\hat y(\hat\theta))
            \le\epsilon
        \right]
    \right]
    =
    b_\epsilon(\theta)
\]
almost surely; see, for example~\cite{jones2004markov}.
\end{proof}
% Theorem~\ref{thm:behavioral-support-existence} establishes that the behavioral support can be consistently estimated from samples of the higher level model posterior. 
Let $\Theta$ carry its Euclidean topology and suppose
$\theta\mapsto y(\cdot;\theta)$ is continuous from $\Theta$ into
$\mathcal H_T$. If the reduced posterior assigns zero probability to the
boundary
$d(y(\cdot;\theta),\hat y(\cdot;\hat\theta))=\epsilon$, then dominated
convergence implies that $b_\epsilon$ is continuous. Therefore,
$K_{\epsilon,\gamma}$ is closed, and $b_\epsilon(\theta)=\gamma$ provides a
well-posed level set learning problem. Before we can use this learned level
set to accelerate MCMC, we must show that this retained set contains the
desired data-fitting region of the detailed model.
\begin{theorem}
\label{thm:data-fit-retention}
For $\epsilon>\eta$ and $\gamma\in(0,1)$, if
\(
    \gamma
    \le
    m_{\hat\pi}(\epsilon-\eta),
\)
then, we have the inclusion
\(
    F_\eta
    \subseteq
    K_{\epsilon,\gamma}.
\)
\end{theorem}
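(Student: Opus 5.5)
The plan is to prove the inclusion pointwise with the triangle inequality for the metric $d$ on $\mathcal H$, and then use monotonicity of the reduced posterior measure. First we fix the interpretation of $m_{\hat\pi}(\epsilon-\eta)$. By analogy with $m_\pi$, it is the reduced-posterior mass of the reduced data-fit set:
\[
    m_{\hat\pi}(\epsilon-\eta)=\int_{\widehat F_{\epsilon-\eta}}\hat\pi(\hat\theta\mid\bar y)\,d\hat\theta .
\]
This set is well defined with $s=\epsilon-\eta>0$, which is why the hypothesis $\epsilon>\eta$ is needed.

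Fix an arbitrary $\theta\in F_\eta$, so that $d(y(\cdot;\theta),\bar y)\le\eta$. The key step is to show the set inclusion $\widehat F_{\epsilon-\eta}\subseteq R_\epsilon(\theta)$. Take any $\hat\theta\in\widehat F_{\epsilon-\eta}$. The triangle inequality for the norm-induced metric gives
\[
    d\bigl(y(\cdot;\theta),\hat y(\cdot;\hat\theta)\bigr)\le d\bigl(y(\cdot;\theta),\bar y\bigr)+d\bigl(\bar y,\hat y(\cdot;\hat\theta)\bigr)\le \eta+(\epsilon-\eta)=\epsilon .
\]
Hence $(\theta,\hat\theta)\in R_\epsilon$, that is, $\hat\theta\in R_\epsilon(\theta)$.

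Next we integrate the reduced posterior over both sets. Both are measurable: $R_\epsilon(\theta)$ by the standing Borel measurability assumption on $\hat\theta\mapsto d(y(\cdot;\theta),\hat y(\cdot;\hat\theta))$. For $\widehat F_{\epsilon-\eta}$ we need measurability of $\hat\theta\mapsto d(\hat y(\cdot;\hat\theta),\bar y)$. This is implicit in defining $m_{\hat\pi}$, and we would state it explicitly as the analogous assumption. Monotonicity of the measure then yields
\[
    b_\epsilon(\theta)=\int_{R_\epsilon(\theta)}\hat\pi(\hat\theta\mid\bar y)\,d\hat\theta\ \ge\ \int_{\widehat F_{\epsilon-\eta}}\hat\pi(\hat\theta\mid\bar y)\,d\hat\theta=m_{\hat\pi}(\epsilon-\eta)\ge\gamma .
\]
So $\theta\in K_{\epsilon,\gamma}$, and since $\theta\in F_\eta$ was arbitrary, $F_\eta\subseteq K_{\epsilon,\gamma}$.

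There is no serious obstacle. The only delicate point is bookkeeping: $m_{\hat\pi}$ has to be read as the reduced analogue of $m_\pi$, defined over $\widehat F_s$, and the measurability needed for both integrals has to be in place. We would also note that the bound is uniform in $\theta$ and uses neither Proposition~\ref{prop:approx-sim-certificate} nor any parameter correspondence. The role of Proposition~\ref{prop:approx-sim-certificate} is only to guide the choice of $\epsilon$ so that the reduced posterior can place enough mass on $\widehat F_{\epsilon-\eta}$.
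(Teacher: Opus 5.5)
Your proposal is correct and follows the paper's proof essentially step for step: the triangle inequality through $\bar y$ gives $\widehat F_{\epsilon-\eta}\subseteq R_\epsilon(\theta)$ for each $\theta\in F_\eta$, and monotonicity of the reduced posterior measure then yields $b_\epsilon(\theta)\ge m_{\hat\pi}(\epsilon-\eta)\ge\gamma$. Your explicit measurability remark is a reasonable tightening that the paper leaves implicit; one small correction is that $\widehat F_s$ is defined for all $s\ge0$, so $\epsilon>\eta$ is not needed for well-definedness, and the same argument works verbatim for $\epsilon\ge\eta$.
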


\begin{proof}
For any $\theta\in F_\eta$ and any
$\hat\theta\in\widehat F_{\epsilon-\eta}$, the triangle inequality gives
\[
\begin{aligned}
    d(y,\hat y)
    \le
    d(y,\bar y)
    +
    d(\bar y,\hat y)
    \le
    \eta+(\epsilon-\eta)
    =
    \epsilon.
\end{aligned}
\]
Therefore,
\(
    \widehat F_{\epsilon-\eta}
    \subseteq
    R_\epsilon(\theta).
\)
and it follows that
\[
\begin{aligned}
    b_\epsilon(\theta)
    &=
    \int_{R_\epsilon(\theta)}
    \hat\pi(\hat\theta\mid\bar y)\,d\hat\theta
    \ge
    \int_{\widehat F_{\epsilon-\eta}}
    \hat\pi(\hat\theta\mid\bar y)\,d\hat\theta\\
    &=
    m_{\hat\pi}(\epsilon-\eta)
    \ge
    \gamma.
\end{aligned}
\]
Hence $\theta\in K_{\epsilon,\gamma}$ and since $\theta$ was arbitrary, 
$F_\eta\subseteq K_{\epsilon,\gamma}$.
\end{proof}
This is the main result needed to guarantee that the  behavioral restriction retains every detailed parameter whose simulated output satisfies the chosen data fit tolerance, provided that the reduced posterior assigns
sufficient mass to trajectories within $\epsilon-\eta$ of the data.
\subsection{Linear models with parametric initial conditions}
For the special case when parameters only appear in the initial conditions, we can derive closed-form expressions for the optimal reduced representation and the exact behavioral support. Consider a linear autonomous system, $\dot x =Ax$, with $x(0)=\theta\in\Theta\subseteq\mathbb R^n$,
and the reduced complexity linear system
\(
    \dot{\hat x} =\hat A\hat x
    \) 
    with $\hat x(0)=\hat\theta\in\widehat\Theta
    \subseteq\mathbb R^{\hat n}$, the finite horizon outputs are
\(
    y(t;\theta)=Ce^{At}\theta,
    \hat y(t;\hat\theta)
    =
    \hat C e^{\hat A t}\hat\theta.
\). 
The behavioral compatibility condition imposed in this case gives rise to a finite-horizon least-squares projection, which provides a closed-form mapping from the detailed initial condition to the optimal reduced initial condition in this special case. 
\begin{theorem}
\label{thm:linear-optimal-representation}
For every $\theta\in\mathbb R^n$, the unique reduced initial condition that minimizes $d^2(y(\cdot;\theta),\hat y(\cdot;\hat\theta))$ is
\(
    \hat\theta^\star(\theta)
    =
    \hat Q_T^{-1}H_T^\top\theta,
\)
and the minimum value is
\[
    \inf_{\hat\theta\in\mathbb R^{\hat n}}
    d^2\!\left(
        y(\cdot;\theta),
        \hat y(\cdot;\hat\theta)
    \right)
    =
    \theta^\top E_T\theta
\]
where 
\(
    E_T\) is a positive semidefinite matrix defined as
\(
    E_T :=
    Q_T-H_T\hat Q_T^{-1}H_T^\top.
\)
% Moreover,
% \[
% \begin{aligned}
%     d^2(y,\hat y)
%     &=
%     \theta^\top E_T\theta\\
%     &\quad+
%     \left(
%         \hat\theta-\hat\theta^\star(\theta)
%     \right)^\top
%     \hat Q_T
%     \left(
%         \hat\theta-\hat\theta^\star(\theta)
%     \right).
% \end{aligned}
% \]
\end{theorem}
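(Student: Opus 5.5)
The plan is to expand the squared distance as a quadratic form in $\hat\theta$ and complete the square. Since $y(t;\theta)=Ce^{At}\theta$ and $\hat y(t;\hat\theta)=\hat Ce^{\hat At}\hat\theta$, I would first write
\[
d^2(y(\cdot;\theta),\hat y(\cdot;\hat\theta))=\int_0^T\|Ce^{At}\theta-\hat Ce^{\hat At}\hat\theta\|_2^2\,dt .
\]
Expanding the integrand and moving the constant vectors outside the integrals, this equals
\[
\theta^\top Q_T\theta-2\,\theta^\top H_T\hat\theta+\hat\theta^\top\hat Q_T\hat\theta ,
\]
with the Gramians exactly as defined in the Preliminaries. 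The cross term uses $\theta^\top e^{A^\top t}C^\top\hat Ce^{\hat At}\hat\theta$, whose two halves are transposes of each other.

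Next I would use the standing assumption that $(\hat A,\hat C)$ is observable over $[0,T]$, so $\hat Q_T\succ0$. Completing the square gives
\[
d^2=\theta^\top\bigl(Q_T-H_T\hat Q_T^{-1}H_T^\top\bigr)\theta+\bigl(\hat\theta-\hat Q_T^{-1}H_T^\top\theta\bigr)^\top\hat Q_T\bigl(\hat\theta-\hat Q_T^{-1}H_T^\top\theta\bigr),
\]
which I would check by direct expansion. The second term is nonnegative and vanishes if and only if $\hat\theta=\hat\theta^\star(\theta)=\hat Q_T^{-1}H_T^\top\theta$, because $\hat Q_T$ is positive definite. This shows at once that the minimizer exists, that it is unique, and that the minimum value is $\theta^\top E_T\theta$.

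The remaining claim is that $E_T\succeq0$. The minimum value is an infimum of squared norms, so $\theta^\top E_T\theta\ge0$ for every $\theta$. Since $E_T$ is symmetric (both $Q_T$ and $H_T\hat Q_T^{-1}H_T^\top$ are), it is positive semidefinite.

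As an alternative check, I would also view $E_T$ as a Schur complement. The block matrix
\[
\begin{pmatrix}Q_T&H_T\\H_T^\top&\hat Q_T\end{pmatrix}=\int_0^T\begin{pmatrix}e^{A^\top t}C^\top\\-e^{\hat A^\top t}\hat C^\top\end{pmatrix}\begin{pmatrix}Ce^{At}&-\hat Ce^{\hat At}\end{pmatrix}dt ,
\]
taken with the sign convention that matches the cross term, is a Gram matrix and hence positive semidefinite. Its Schur complement with respect to the invertible block $\hat Q_T$ is then positive semidefinite as well.

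I expect the only delicate point to be the bookkeeping of transposes in $H_T$ versus $H_T^\top$ in the cross term. That is what fixes the formula $\hat Q_T^{-1}H_T^\top\theta$, rather than $\hat Q_T^{-1}H_T\theta$, for the optimal reduced initial condition.
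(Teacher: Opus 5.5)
Your proposal is correct and follows essentially the same route as the paper: you expand $d^2$ into $\theta^\top Q_T\theta-2\theta^\top H_T\hat\theta+\hat\theta^\top\hat Q_T\hat\theta$, complete the square using $\hat Q_T\succ0$, and read off the unique minimizer, the minimum value, and $E_T\succeq0$ from nonnegativity of the minimum (your explicit symmetry remark is a small addition the paper leaves implicit). One minor slip is in your optional Schur-complement check: the Gram integral as you wrote it produces off-diagonal blocks $-H_T$ and $-H_T^\top$, not $+H_T$. This does not affect the argument, since dropping the minus sign in the factor also gives a Gram matrix, and the Schur complement is $Q_T-H_T\hat Q_T^{-1}H_T^\top$ in either case.
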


\begin{proof}
Expanding the squared finite-horizon distance gives
\[
    d^2(y,\hat y)
    =
    \theta^\top Q_T\theta
    -
    2\theta^\top H_T\hat\theta
    +
    \hat\theta^\top\hat Q_T\hat\theta.
\]
Then, we can complete the square in $\hat\theta$ to obtain
\begin{equation}
\begin{aligned}
    d^2(y,\hat y)
    &=
    \theta^\top
    \left(
        Q_T-H_T\hat Q_T^{-1}H_T^\top
    \right)
    \theta\\
    &\quad+
    \left(
        \hat\theta-\hat Q_T^{-1}H_T^\top\theta
    \right)^\top
    \hat Q_T
    \left(
        \hat\theta-\hat Q_T^{-1}H_T^\top\theta
    \right).
\end{aligned}
\label{eq:linear-completed-square}
\end{equation}
Since $\hat Q_T\succ0$, the second term is nonnegative and vanishes uniquely
at
\(
    \hat\theta
    =
    \hat Q_T^{-1}H_T^\top\theta.
\)
The minimum value is therefore $\theta^\top E_T\theta$ and its nonnegativity implies $E_T\succeq0$.
\end{proof}
The closed-form of the directed Hausdorff distance immediately follows from this Theorem. 
\begin{corollary}
\label{cor:linear-exact-hausdorff}
Suppose the reduced parameter domain contains
$\hat\theta^\star(\theta)$ for every $\theta\in\Theta$. Then
\(
    d_H(\mathcal Y,\widehat{\mathcal Y})
    =
    \sup_{\theta\in\Theta}
    \sqrt{\theta^\top E_T\theta}
\)
If $\norm{\theta}_2\le R$ for all $\theta\in\Theta$, then
\(
    d_H(\mathcal Y,\widehat{\mathcal Y})
    \le
    R\sqrt{\lambda_{\max}(E_T)}.
\)
Equality holds when $\Theta$ is the Euclidean ball
$\{\theta:\norm{\theta}_2\le R\}$.
Therefore, the choice of $\epsilon$ can be certified as
\(
    \epsilon
    \ge
    R\sqrt{\lambda_{\max}(E_T)}.
\)
\end{corollary}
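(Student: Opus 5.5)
The plan is to substitute the closed-form minimization of Theorem~\ref{thm:linear-optimal-representation} into the inner infimum of the directed Hausdorff distance. After that, the remaining work is a Rayleigh-quotient bound over $\Theta$.

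First, fix $\theta\in\Theta$. Theorem~\ref{thm:linear-optimal-representation} states that the infimum of $d^2(y(\cdot;\theta),\hat y(\cdot;\hat\theta))$ over all of $\mathbb R^{\hat n}$ equals $\theta^\top E_T\theta$, and that it is attained at $\hat\theta^\star(\theta)$. The infimum over the subset $\widehat\Theta$ can only be at least as large. Since $\hat\theta^\star(\theta)\in\widehat\Theta$ by hypothesis, it is attained there as well, so
\[
\inf_{\hat\theta\in\widehat\Theta} d\!\left(y(\cdot;\theta),\hat y(\cdot;\hat\theta)\right)=\sqrt{\theta^\top E_T\theta}.
\]
This uses the monotonicity of the square root and the fact that $E_T\succeq 0$. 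Taking the supremum over $\theta\in\Theta$ in the parametric form of the directed Hausdorff distance then gives the first identity.

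Next comes the norm bound. Since $E_T$ is symmetric positive semidefinite, the Rayleigh quotient gives $\theta^\top E_T\theta\le\lambda_{\max}(E_T)\norm{\theta}_2^2\le\lambda_{\max}(E_T)R^2$ for every $\theta\in\Theta$. Taking square roots and then the supremum yields $d_H(\mathcal Y,\widehat{\mathcal Y})\le R\sqrt{\lambda_{\max}(E_T)}$.

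For equality on the ball, take $\theta=Rv$, where $v$ is a unit eigenvector of $E_T$ for $\lambda_{\max}(E_T)$. This $\theta$ lies in the ball and attains $\theta^\top E_T\theta=R^2\lambda_{\max}(E_T)$, so the supremum is achieved. Symmetry of $E_T$ follows from the symmetry of $Q_T$ and $\hat Q_T^{-1}$, so a real orthonormal eigenbasis exists.

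Finally, the certification statement is immediate. Any $\epsilon\ge R\sqrt{\lambda_{\max}(E_T)}$ upper-bounds $d_H(\mathcal Y,\widehat{\mathcal Y})$, exactly as in the conclusion of Proposition~\ref{prop:approx-sim-certificate}.

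There is no genuine obstacle here. The only point needing care is the first step: the closed-form minimizer comes from an unconstrained problem over $\mathbb R^{\hat n}$, and transferring it to $\widehat\Theta$ is valid only because of the containment hypothesis on the reduced parameter domain.
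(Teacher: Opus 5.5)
Your proof is correct and is the argument the paper intends. The paper gives no separate proof, only the remark that the corollary follows immediately from Theorem~\ref{thm:linear-optimal-representation}; you fill in exactly that reasoning by substituting the attained minimum $\theta^\top E_T\theta$ into the inner infimum via the containment hypothesis on $\widehat\Theta$, then applying the Rayleigh-quotient bound, and finally using the eigenvector $Rv$ to get equality on the ball.
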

Additionally, from~\eqref{eq:linear-completed-square}, we also have a closed-form expression for the behavioral compatibility set.
\begin{corollary}
\label{cor:linear-compatibility-set}
Under the assumption of Corollary~\ref{cor:linear-exact-hausdorff},
for a fixed detailed initial condition $\theta$, behavioral compatibility is
possible if and only if
\(
    \epsilon^2-\theta^\top E_T\theta
    \ge0.
\)
Then, the behavioral compatibility set is
\[
\begin{aligned}
    R_\epsilon(\theta)
    =
    \Big\{
        \hat\theta\in\widehat\Theta:
        &
        \left(
            \hat\theta-\hat\theta^\star(\theta)
        \right)^\top
        \hat Q_T
        \left(
            \hat\theta-\hat\theta^\star(\theta)
        \right)\\
        &
        \le
        \epsilon^2-\theta^\top E_T\theta
    \Big\}.
\end{aligned}
\]
Thus $R_\epsilon(\theta)$ is the intersection of the reduced parameter domain
with an ellipsoid centered at
$\hat\theta^\star(\theta)$.
\end{corollary}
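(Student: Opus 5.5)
The plan is to read Corollary~\ref{cor:linear-compatibility-set} straight off the completed-square identity~\eqref{eq:linear-completed-square} from the proof of Theorem~\ref{thm:linear-optimal-representation}. For a fixed $\theta$ and any $\hat\theta\in\widehat\Theta$, that identity writes
\[
d^2\!\left(y(\cdot;\theta),\hat y(\cdot;\hat\theta)\right)=\theta^\top E_T\theta+\left(\hat\theta-\hat\theta^\star(\theta)\right)^\top\hat Q_T\left(\hat\theta-\hat\theta^\star(\theta)\right),
\]
with $\hat\theta^\star(\theta)=\hat Q_T^{-1}H_T^\top\theta$. Since $d\ge 0$ and $\epsilon>0$, the condition $d\le\epsilon$ in the definition of $R_\epsilon$ is equivalent to $d^2\le\epsilon^2$. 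Substituting the identity and moving $\theta^\top E_T\theta$ to the right-hand side gives exactly the quadratic inequality in the displayed set. Intersecting with $\widehat\Theta$, which is built into the definition of $R_\epsilon(\theta)$, yields the stated characterization. Because $\hat Q_T\succ 0$, the sublevel set $\{\hat\theta:(\hat\theta-\hat\theta^\star)^\top\hat Q_T(\hat\theta-\hat\theta^\star)\le r^2\}$ is a (possibly degenerate, when $r=0$) ellipsoid centered at $\hat\theta^\star(\theta)$. This gives the geometric description.

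For the ``if and only if'' feasibility statement, I would argue both directions. If $\epsilon^2-\theta^\top E_T\theta<0$, the right-hand side is negative while the left-hand quadratic form is nonnegative, so $R_\epsilon(\theta)=\varnothing$. Conversely, if $\epsilon^2-\theta^\top E_T\theta\ge 0$, I would use the assumption of Corollary~\ref{cor:linear-exact-hausdorff}, namely $\hat\theta^\star(\theta)\in\widehat\Theta$. Then $\hat\theta=\hat\theta^\star(\theta)$ makes the quadratic term vanish, so the inequality holds and $R_\epsilon(\theta)\neq\varnothing$. This is precisely where the domain assumption is needed: without it, the ellipsoid could miss $\widehat\Theta$ even when the radius is nonnegative.

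The only subtle point, and hence the main obstacle, is this use of the domain hypothesis to make ``compatibility is possible'' equivalent to nonnegativity of $\epsilon^2-\theta^\top E_T\theta$. I would interpret ``possible'' as $R_\epsilon(\theta)\neq\varnothing$. The remaining steps are direct algebra from~\eqref{eq:linear-completed-square}.
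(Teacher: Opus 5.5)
Your proposal is correct and takes the same approach as the paper, which reads the corollary directly off the completed-square identity \eqref{eq:linear-completed-square}. Your explicit use of the domain hypothesis $\hat\theta^\star(\theta)\in\widehat\Theta$ for the ``if'' direction of the feasibility claim, and your observation that the set is empty when $\epsilon^2-\theta^\top E_T\theta<0$, supply details the paper leaves implicit.
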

% \begin{proof}
% The result follows directly from the completed-square expression in
% Theorem~\ref{thm:linear-optimal-representation}.
% \end{proof}

\subsubsection{Behavioral support in the specialized linear case}

For the linear initial-condition model,
\[
\begin{aligned}
    b_\epsilon(\theta)
    &=
    \int_{R_\epsilon(\theta)}
    \hat\pi(\hat\theta\mid\bar y)\,d\hat\theta\\
    &=
    \Pr_{\hat\theta\sim\hat\pi(\cdot\mid\bar y)}
    \left[
        \left(
            \hat\theta-\hat\theta^\star(\theta)
        \right)^\top
        \hat Q_T
        \left(
            \hat\theta-\hat\theta^\star(\theta)
        \right)
        \le
        \epsilon^2-\theta^\top E_T\theta
    \right]
\end{aligned}
\]
whenever
$\epsilon^2-\theta^\top E_T\theta\ge0$, and
\(
    b_\epsilon(\theta)=0
\)
otherwise.

For a reduced posterior family for which ellipsoidal probabilities can be
evaluated analytically, $b_\epsilon(\theta)$ and the level set
\(
    b_\epsilon(\theta)=\gamma
\)
can therefore be computed directly.
% \paragraph{Outcome.}
Therefore, the specialized linear case provides a closed-form behavioral support formulation, which we use to validate (and visualize) the GP level-set learning procedure.
% \subsubsection{Specialized retention statement}

% Combining Corollary~\ref{cor:linear-exact-hausdorff} with
% Theorem~\ref{thm:data-fit-retention} gives the complete theoretical chain for
% the initial-condition case:
% \[
%     d_H(\mathcal Y,\widehat{\mathcal Y})
%     \;\longrightarrow\;
%     \epsilon
%     \;\longrightarrow\;
%     b_\epsilon(\theta)
%     \;\longrightarrow\;
%     K_{\epsilon,\gamma}
%     \;\longrightarrow\;
%     F_\eta\subseteq K_{\epsilon,\gamma}.
% \]

% For the ball domain $\Theta_R$, an exact compatibility certificate is
% available from
% \[
%     \epsilon
%     \ge
%     R\sqrt{\lambda_{\max}(E_T)}.
% \]
% For any such $\epsilon>\eta$, if
% \[
%     \gamma
%     \le
%     m_{\hat\pi}(\epsilon-\eta),
% \]
% then every detailed initial condition satisfying
% \[
%     d(y(\theta),\bar y)\le\eta
% \]
% is retained by the behavioral-support restriction.

% \paragraph{Outcome.}
% The parametric initial-condition case closes the full theory analytically:
% it gives the best reduced representation, the exact model-level mismatch, a
% principled compatibility tolerance, an explicit compatibility region, and a
% retention guarantee for detailed-model data fit.

\subsection{Computational Implementation}
\subsubsection{Efficiency-exploration tradeoff}
The results above can be applied to speed up MCMC-based parameter inference of a lower-level, detailed model in a hierarchy. The level of efficiency vs. exploration tradeoff can be tuned since the formulations above are monotonic. That is, for every fixed $\theta$,
\(
    \epsilon_1\le\epsilon_2\) implies \(
    b_{\epsilon_1}(\theta)
    \le
    b_{\epsilon_2}(\theta).
\)
So, for fixed $\gamma$,\( 
    K_{\epsilon_1,\gamma}
    \subseteq
    K_{\epsilon_2,\gamma}.
\). Finally, for a fixed $\epsilon$, we have from the definition of a superlevel set in $\gamma$ that $\gamma_1 \le \gamma_2$ implies \(
    K_{\epsilon,\gamma_2}
    \subseteq
    K_{\epsilon,\gamma_1}.
\)
The monotonicity arguments hold true since for every $\epsilon_1\le\epsilon_2$, 
\(
    R_{\epsilon_1}(\theta)
    \subseteq
    R_{\epsilon_2}(\theta).
\)
integration over nested sets gives
\(
    b_{\epsilon_1}(\theta)
    \le
    b_{\epsilon_2}(\theta).
\)
The tradeoff is that for higher $\epsilon$, the retained detailed parameter region is bigger, which means less efficient but more exploratory inference. Similarly, increasing $\gamma$ makes the retained set smaller. These thresholds are constrained by Theorem~\ref{thm:data-fit-retention}, which can be used when 
$F_\eta$ is the desired data fit level.

\subsubsection{Learning the behavioral compatibility level set using GPs}
% Let
% \[
%     \hat\theta^{(1)},\ldots,\hat\theta^{(M)}
% \]
% denote samples from a stationary ergodic reduced model MCMC run.
% For fixed $\theta$, define
% \[
%     I_j(\theta)
%     :=
%     \mathbf{1}
%     \left[
%         d\!\left(
%             y(\cdot;\theta),
%             \hat y(\cdot;\hat\theta^{(j)})
%         \right)
%         \le
%         \epsilon
%     \right].
% \]
% The Monte Carlo estimator of behavioral support is
% \[
%     \hat b_{\epsilon,M}(\theta)
%     :=
%     \frac{1}{M}
%     \sum_{j=1}^{M}
%     I_j(\theta).
% \]

% Under the usual ergodic conditions,
% \[
%     \hat b_{\epsilon,M}(\theta)
%     \longrightarrow
%     b_\epsilon(\theta)
% \]
% as $M\to\infty$.
% If the indicator process has summable autocovariances, its asymptotic variance
% is
% \[
%     \operatorname{Var}
%     \left[
%         \hat b_{\epsilon,M}(\theta)
%     \right]
%     \approx
%     \frac{\sigma_I^2(\theta)}{M},
% \]
% where
% \[
%     \sigma_I^2(\theta)
%     :=
%     \operatorname{Var}[I_0(\theta)]
%     +
%     2\sum_{k=1}^{\infty}
%     \operatorname{Cov}
%     \left(
%         I_0(\theta),
%         I_k(\theta)
%     \right).
% \]
% Equivalently, when an integrated autocorrelation time $\tau_I(\theta)$ is
% well defined,
% \[
%     \operatorname{Var}
%     \left[
%         \hat b_{\epsilon,M}(\theta)
%     \right]
%     \approx
%     \frac{
%         b_\epsilon(\theta)
%         \left(1-b_\epsilon(\theta)\right)
%         \tau_I(\theta)
%     }{M}.
% \]
% This variance can be used as the observation-noise scale supplied to the GP.

% \subsubsection{GP level-set approximation}

For fixed $\epsilon$ and $\gamma$, we define a validity function as the target for a GP level-set approximation:
\(
    v_{\epsilon,\gamma}(\theta)
    :=
    b_\epsilon(\theta)-\gamma.
\)
Then, we can write the retained set as a superlevel set of this function,
\(
    K_{\epsilon,\gamma}
    =
    \{\theta:v_{\epsilon,\gamma}(\theta)\ge0\}.
\)
The GP training value at an evaluated detailed parameter is
\(
    \hat v_{\epsilon,\gamma,M}(\theta)
    :=
    \hat b_{\epsilon,M}(\theta)-\gamma.
\)  
Let $\mathcal D_n$ denote the resulting set of $n$ detailed parameter evaluations and empirical validity values. After $n$ detailed evaluations, we have the GP posterior predictive distribution as
\[
    v_{\epsilon,\gamma}(\theta)
    \mid
    \mathcal D_n
    \sim
    \mathcal N
    \left(
        \mu_n(\theta),
        \sigma_n^2(\theta)
    \right).
\]
So, the posterior probability that a detailed model parameter belongs to the retained
set is
\[
    p_{K,n}(\theta)
    :=
    \Pr
    \left[
        v_{\epsilon,\gamma}(\theta)\ge0
        \mid
        \mathcal D_n
    \right]
    =
    \Phi
    \left(
        \frac{\mu_n(\theta)}{\sigma_n(\theta)}
    \right).
\]
Once we have this GP, we can use it to reject parameter proposals when the posterior probability of being in the retained set is below a threshold $\zeta$,
\(
    p_{K,n}(\theta)
    \le
    \zeta.
\)
The trained GP is then used as a behavioral prior in the detailed model MCMC inference.

\subsubsection{Threshold selection}
The compatibility tolerance $\epsilon$ specifies the maximum accepted output
discrepancy, while $\eta<\epsilon$ defines the desired detailed model data-fit
tolerance in the retention theorem. The support level $\gamma$ is the minimum
reduced posterior compatibility mass required to retain a detailed parameter.
The GP filtering level $\zeta$ is the minimum posterior probability of set
membership required to simulate a proposal. Increasing either $\gamma$ or
$\zeta$ makes the behavioral restriction more selective.
Algorithm~\ref{alg:behavioral-guidance} summarizes the procedure.

\begin{algorithm}[t]
\caption{Behaviorally guided detailed model inference}
\label{alg:behavioral-guidance}
\begin{algorithmic}[1]
\REQUIRE Reduced posterior samples
$\{\hat\theta^{(j)}\}_{j=1}^{M}$ and precomputed reduced trajectories,
compatibility tolerance $\epsilon$, support level $\gamma$, GP filtering 
level $\zeta$, initial detailed design, GP-MPM budget, detailed model prior and
likelihood.
\STATE Evaluate $\hat b_{\epsilon,M}(\theta)$ on the initial detailed design.
\STATE Fit a GP to
$\hat v_{\epsilon,\gamma,M}(\theta)
=
\hat b_{\epsilon,M}(\theta)-\gamma$.
\FOR{each GP-MPM iteration}
    \STATE Select the next detailed parameter using the GP-MPM acquisition rule
    \cite{moorse2024validation}.
    \STATE Simulate the detailed model once at the selected parameter.
    \STATE Compute $\hat b_{\epsilon,M}$ by comparison with the stored reduced
    posterior trajectories.
    \STATE Update the GP training set and refit the GP.
\ENDFOR
\STATE Store the trained GP for detailed-model inference.
\STATE Initialize the detailed model MCMC ensemble.
\FOR{each detailed model MCMC proposal $\theta^\star$}
    \STATE Evaluate
    $
        p_K(\theta^\star)
        =
        \Phi(
        \mu(\theta^\star)/\sigma(\theta^\star)
        )
    $.
    \IF{$p_K(\theta^\star)\le\zeta$}
        \STATE Reject $\theta^\star$ without a detailed model simulation.
    \ELSE
        \STATE Simulate the detailed model at $\theta^\star$.
        \STATE Evaluate the detailed likelihood and apply the ordinary MCMC
        acceptance rule.
    \ENDIF
\ENDFOR
\RETURN Behaviorally filtered detailed model posterior samples.
\end{algorithmic}
\end{algorithm}

\subsubsection{Computational efficiency analysis}
Let $n$ be the number of detailed model simulations used to train the GP, let
$N$ be the number of subsequent detailed model MCMC proposals that pass the ordinary parameter priors, and let $p_K$
be the probability that such a proposal passes the GP filter. Denote the costs of a
reduced model inference, a detailed model simulation, and a GP-filter
evaluation by $c_R$, $c_F$, and $c_G$, respectively. Let $c_{\mathrm{GP}}$
denote the total cost of GP fitting and GP-MPM acquisition during training.
Ignoring the comparatively
small cost of comparing one detailed trajectory with the stored reduced
trajectories, the detailed model inference cost is
\(
    C_{\mathrm{base}}
    =
    Nc_F.
\)
The behaviorally guided cost is approximately
\[
    C_{\mathrm{guided}}
    =
    c_R
    +
    nc_F
    +
    c_{\mathrm{GP}}
    +
    Nc_G
    +
    Np_Kc_F.
\]
The GP-guided inference is computationally favorable when
\[
    N(1-p_K)c_F
    >
    c_R+nc_F+c_{\mathrm{GP}}+Nc_G.
\]
The expected number of detailed simulations avoided during the MCMC stage is
$N(1-p_K)$. We report computational improvements as percentage reductions in
MCMC elapsed time relative to baseline. Let $T_{\mathrm{base}}$ and
$T_{\mathrm{guided}}$ denote the measured baseline and guided MCMC times.
The reported improvement is
\[
    \text{MCMC time reduction}
    =100\left(1-\frac{T_{\mathrm{guided}}}{T_{\mathrm{base}}}\right)\%.
\]
Reduced-model inference and GP training are one-time setup costs,
reported separately in the gene-expression application.

\section{Applications}
\subsection{Illustrative linear system example}
\begin{figure}[htbp]
    \centering
    \includegraphics[width=0.98\textwidth]{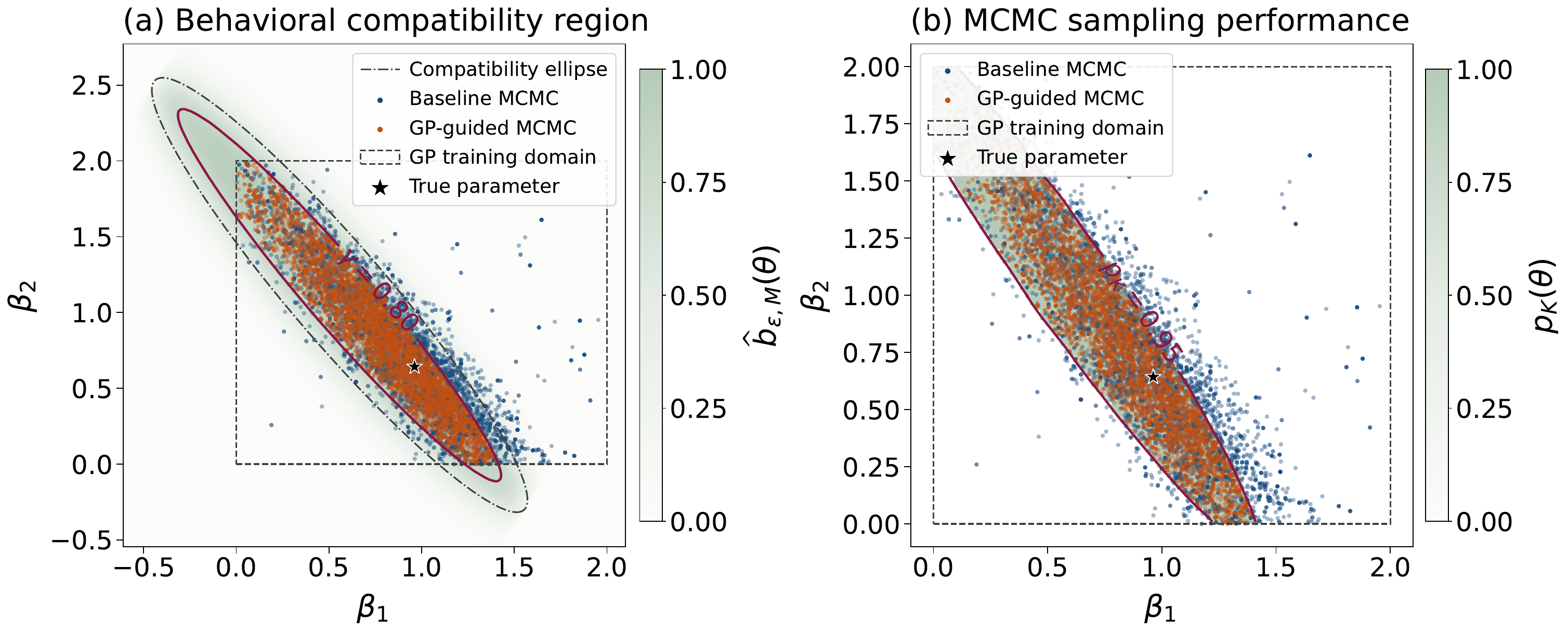}
    \caption{\footnotesize (a) Behavioral compatibility region and its boundary ($\hat b_{\epsilon,M}(\theta)=\gamma=0.80$). The dash-dotted
    compatibility ellipse is computed at the fixed reduced parameter
    $\alpha\approx1.627$, which is the mean of the reduced posterior. (b) MCMC sampling performance of the baseline inference (blue) and the GP-guided inference samples (orange). We observe that the baseline MCMC explores outside the learned acceptance region of the GP. The $\star$ marks the ground truth.}
    \label{fig:joint-parameter-results}
\end{figure}
Consider a two-state detailed model with
$A=\operatorname{diag}(-a_1,-a_2)$, $a_1=1$, $a_2=2$, and output
$y(t;\beta)=\beta_1e^{-a_1t}+\beta_2e^{-a_2t}$.
The reduced model has one state with matrix $\hat A=-a_r$, $a_r=1.5$, and output
$\hat y(t;\alpha)=\alpha e^{-a_rt}$.
For this example, we consider that the parameters are the initial conditions,
$x_0=\theta=(\beta_1,\beta_2)^{\mathsf T}$ and
$\hat x_0=\hat\theta=\alpha$.
We construct synthetic observations at $1001$ equally spaced times on $[0,10]$ by adding independent zero-mean Gaussian noise with standard deviation $\sigma_{\mathrm{obs}}=0.02$ to the detailed trajectory.
We set the ground truth 
$\theta^{\mathrm{true}}\approx(0.96,0.64)^{\mathsf T}$, a point chosen at random.
For both model outputs, $g$, we define the log-likelihood formulation for this illustrative example as
\[
    \log L(\vartheta)
    =-\frac{d_T^2(g(\vartheta),\bar y)}{2s_L^2}+\mathrm{const},
    \qquad
    d_T^2=\int_0^T\bigl(g(t;\vartheta)-\bar y(t)\bigr)^2\,dt.
\]
Here $s_L$ is an independent likelihood scale, which we set to $0.1$. The priors are
$\beta_1,\beta_2\sim\operatorname{Uniform}(0,2)$ for both detailed fits and
$\alpha\sim\mathcal N(1,0.5^2)$. For the baseline, reduced, and GP-guided inference, we use $16$ walkers and $1000$
steps, discarding the first $200$ steps for burn-in. To train the GP, we use $6$ initial
evaluations followed by $100$ GP-MPM~\cite{moorse2024validation} iterations on the same coordinate space $[0,2]^2$.

We note that at the ground truth parameter values, Theorem~\ref{thm:linear-optimal-representation}
gives $\alpha^*=1.705$ and a minimum integrated squared error of
$0.0088$, corresponding to an $L^2$ distance of $0.094$. Now, we use our results above to set the thresholds. Since we have added an independent Gaussian observation noise, we set
\[
    \eta=\sigma_{\mathrm{obs}}\sqrt{w_{\max}\chi^2_{N,0.95}},
    \qquad
    \epsilon=\eta+q_\gamma,
\]
where $\chi^2_{N,0.95}$ is the $0.95$-quantile of a chi-squared variable, $w_{\max}$ is a weight, $q_\gamma$ is the $\gamma$-quantile of reduced posterior data
discrepancy, $d_T(\hat y(\alpha),\bar y)$.
% Gaussian conjugacy gives the reduced posterior mean
% $1.6268$ and standard deviation $0.1637$, so its discrepancy quantile
% can be computed directly. 
So, with a choice of $\gamma=0.80$ and GP filtering level
$\zeta=0.95$, we obtain $\eta\approx0.0656$, $q_\gamma\approx0.1747$,
and $\epsilon\approx0.2403$. By construction, we note that
$m_{\hat\pi}(\epsilon-\eta)=\gamma$, which satisfies
Theorem~\ref{thm:data-fit-retention}. The empirical estimate of the behavioral compatibility region,
$\{\theta:\hat b_{\epsilon,M}(\theta)\ge\gamma\}$ is shown in Figure~\ref{fig:joint-parameter-results}a.
At a fixed reduced parameter $\alpha=\mu_\alpha$,
where $\mu_\alpha$ is the reduced posterior mean, compatibility in
the detailed parameters is the ellipse
$\beta^{\mathsf T}Q_T\beta-2\mu_\alpha H_T^{\mathsf T}\beta
+\hat Q_T\mu_\alpha^2\le\epsilon^2$, shown separately in the figure.
The trained GP filtering probability and the comparison of the baseline
and GP-guided MCMC inference over $[0,2]^2$ prior domain is shown in Figure~\ref{fig:joint-parameter-results}b.
\FloatBarrier

\subsection{Inference of gene-expression parameters}
\label{sec:example-two}
To demonstrate the acceleration of Bayesian inference, we show an application where we infer the kinetic parameters of deGFP expression in a PURE cell-free system. For this application, we use previously published experimental measurements~\cite{jurado2026nucleotide} of deGFP fluorescence for different initial DNA concentrations. We show how our framework can be applied to a hierarchy of models. We build three levels of gene expression models, each with different levels of complexity. The high-level model is a two-state model that describes mRNA and GFP dynamics. The phenomenological model is a nine-state model that represents transcription, translation, and RNA degradation through binding complexes. The mechanistic model is a 38-state model that represents closed and active complexes with up to five polymerases per DNA and ten ribosomes per mRNA. We use the high-level model to learn the behavioral compatibility of the phenomenological and mechanistic models. We then use the learned behavioral compatibility to accelerate MCMC inference of the phenomenological and mechanistic models.

\subsubsection{Experimental data}
From~\cite{jurado2026nucleotide}, we have three replicate time-series trajectories of deGFP at six initial DNA
concentrations: $1.99$, $1.06$, $0.47$, $0.26$, $0.12$ and
$0.07\,\mathrm{nM}$. We post-process the data to retain measurements between $0.25$ and $3$ hours to remove the initial lag in the measurements and the stationary protein expression since we do not capture these effects in our models, and usually, the rate of expression after lag-phase and before stationary-phase is of interest. We convert the fluorescence arbitrary units to deGFP concentration using $5542.4$ fluorescence units per $\mu\mathrm{M}$ (from~\cite{jurado2026nucleotide}). 

\subsubsection{Hierarchical modeling}
\begin{figure*}[!htb]
\centering
\setlength{\tabcolsep}{0pt}
\resizebox{0.94\textwidth}{!}{%
\begin{tabular}{@{}p{0.318\textwidth}@{\hspace{0.023\textwidth}}p{0.318\textwidth}@{\hspace{0.023\textwidth}}p{0.318\textwidth}@{}}
\raggedright\normalfont (a) High-level Model &
\raggedright\normalfont (b) Phenomenological Model &
\raggedright\arraybackslash\normalfont (c) Detailed Mechanistic Model\\[4pt]
\GeneHierarchyPanel{\GeneHighLevelCRN} &
\GeneHierarchyPanel{\GenePhenomenologicalCRN} &
\GeneHierarchyPanel{\GeneDetailedCRN}\\[3pt]
\centering\footnotesize Using AutoReduce &
\centering\footnotesize BioCRNpyler TxTlExtract &
\centering\arraybackslash\footnotesize BioCRNpyler Multi TX--TL\\[2pt]
\centering\footnotesize 2 states; 5 inferred parameters &
\centering\footnotesize 9 states; 7 inferred parameters &
\centering\arraybackslash\footnotesize 38 states; 7 inferred parameters\\[7pt]
\GeneHighLevelFit & \GenePhenomenologicalFit & \GeneDetailedFit\\[3pt]
\multicolumn{3}{@{}p{\textwidth}@{}}{\GeneHierarchyLegend}\\[7pt]
\centering\footnotesize\GeneHighLevelMetrics &
\centering\footnotesize\GenePhenomenologicalMetrics &
\centering\arraybackslash\footnotesize\GeneDetailedMetrics
\end{tabular}%
}
\caption{Hierarchical modeling of gene expression starting from a high-level mRNA and protein model, to a phenomenological model, and finally to a detailed mechanistic model. (top) CRNs or the ODE representation of each model. (middle) Model predictions (using samples from baseline posterior distributions) alongside experimental data. (bottom) Baseline inference times, steps/walkers, and mean
condition-wise MAP RMSEs.}
\label{fig:ex2-hierarchy}
\end{figure*}

The high-level model describes mRNA and GFP dynamics with two states. We use AutoReduce~\cite{pandey2023robustness} to obtain these
dynamics from conservation laws, quasi-steady-state elimination of binding complexes and parameter lumping. These are commonly used approaches to obtain reduced-order models~\cite{pandey2020modelreduction}. The next level model is a phenomenological model. We use BioCRNpyler~\cite{poole2022biocrnpyler}, a biological CRN compiler tool, to create this model. In BioCRNpyler, we specify the \texttt{TxTlExtract} mixture and mechanisms for 
transcription, translation and RNA degradation that generate chemical reactions with complexes (see Figure~\ref{fig:ex2-hierarchy}). This phenomenological model has 9 states. Finally, the most detailed level model in our application is a model of gene expression that also includes multiple polymerase and ribosome binding complexes. We use BioCRNpyler to create this model as well. This mechanistic model uses the multi TX-TL mechanisms, which represent closed and active complexes with up to five polymerases per DNA and ten ribosomes per mRNA. The model also includes a GFP maturation step. In total, this mechanistic model contains 38 states.
We show the CRNs of these models in Figure~\ref{fig:ex2-hierarchy}, with $\text{G},\text{P},\text{T},\text{R},\text{X},\text{E}$ denoting the DNA, RNA polymerase, mRNA, ribosome, fluorescent GFP and RNase respectively. The complexes $\text{C}_{\text{tx}},\text{C}_{\text{tl}},\text{C}_{\text{deg}}$ represent their binding complexes. 

For the detailed mechanistic model, we write the transcription and translation reactions in a common block form. For transcription, $r=\text{tx}$ and $(\text{S},\text{A},\text{V},N)=(\text{G},\text{P},\text{T},5)$; for translation, $r=\text{tl}$ and $(\text{S},\text{A},\text{V},N)=(\text{T},\text{R},\text{X}_{\text{u}},10)$. The complex $\text{C}^{\text{o}}_{n,r}$ has $n$ active enzymes, and $\text{C}^{\text{c}}_{n,r}$ has $n-1$ active enzymes and one closed enzyme. Binding uses $1\le n<N$, activation and open release use $1\le n\le N$, and closed release uses $2\le n\le N$. Here, $\text{X}_{\text{u}}$ is immature GFP, which matures into fluorescent GFP, $\text{X}$. The mechanistic model uses the RNase-mediated mRNA degradation and GFP loss reactions of the phenomenological model. GFP loss also acts on immature GFP with the same rate $k_{dil}$, as before.

The model hierarchy has different levels of complexity, each with its own set of parameters. The two-state model has five inferred parameters,
\(
    \hat\theta=\begin{bmatrix}k_{tx}&K_1&k_{tl}&k_{dil}&K_0\end{bmatrix}^{\top}.
\)
The phenomenological model has 10 kinetic parameters, and the mechanistic model has 13 kinetic parameters. The gene expression data is from a PURE cell-free protein expression system, which has minimal nuclease activity~\cite{neb_purexpress}. So, we set the total concentration for the pool of the RNase enzyme to a small value of $0.001\,\mu\mathrm{M}$. Further, we also fix the RNase binding, unbinding and catalytic mRNA degradation parameters to $k_{b,deg}=10\,\mu\mathrm{M}^{-1}\mathrm{min}^{-1}$, $k_{u,deg}=1\,\mathrm{min}^{-1}$ and $k_{deg}=0.1\,\mathrm{min}^{-1}$, respectively.
%  the lumped binding constants are
% \[
%     K_1=\frac{k_{u,tx}+k_{tx}}{k_{b,tx}},\qquad
%     K_0=\frac{k_{u,tl}+k_{tl}}{k_{b,tl}},\qquad
%     K_2=\frac{k_{u,deg}+k_{deg}}{k_{b,deg}}.
% \]
For the high-level model, we also fix $K_2=0.11\,\mu\mathrm{M}$ and $k_{deg}=0.1\,\mathrm{min}^{-1}$.
Consequently, both the phenomenological and the mechanistic models have the same set of parameters that are inferred from the data: 
\(
    \theta=\begin{bmatrix}
        k_{b,tx}&k_{u,tx}&k_{b,tl}&k_{u,tl}&k_{tx}&k_{tl}&k_{dil}
    \end{bmatrix}^{\top}.
\)

We also note that we fix $k_{iso,tx}=k_{iso,tl}=10\,\mathrm{min}^{-1}$ and $k_{mat}=1/8.25\,\mathrm{min}^{-1}$ in the detailed mechanistic model. For all models, the polymerase, ribosome and RNase resource pools are $0.1$, $1.2$ and $0.001\,\mu\mathrm{M}$, respectively. Once the hierarchy of models is setup, we use the high-level model of mRNA and protein to learn the behavioral compatibility with the phenomenological and mechanistic models. We denote these learned mappings by $\mathrm{GP}_{\text{hp}}$ and
$\mathrm{GP}_{\text{hm}}$, respectively, for ``high-phenomenological'' and ``high-mechanistic''.

\subsubsection{Baseline Bayesian inference}
We use Bioscrape~\cite{pandey2023bioscrape} for parameter inference, which provides a wrapper around the \texttt{emcee} ensemble sampler~\cite{foremanmackey2013emcee} suitable for biological system models.
We use broad log-uniform priors on the kinetic parameters with bounds chosen according to the biological meaning of each parameter. We have $k_{b,tx}\in[10^{-1},10^{5}], k_{u,tx}\in[10^{-3},10^{3}]$, $k_{b,tl}\in[10^{-2},10^{4}], k_{u,tl}\in[10^{-3},10^{3}]$, $k_{tx}\in[10^{-2},10^{4}], k_{tl}\in[10^{-3},10^{2}],k_{dil}\in[10^{-5},10]$.
For the two-state model, the priors are also log-uniform with different ranges for each parameter chosen according to biological plausibility. We have
$k_{tx}\in[10^{-1},10^3]$, $k_{tl}\in[10^{-2},10]$,
$k_{dil}\in[10^{-4},1]$ and $K_0,K_1\in[10^{-3},10]\,\mu\mathrm{M}$.
For baseline inference, we use $\GeneDetailedWalkers$ walkers, $\GeneDetailedSteps$ steps with $\GeneDetailedBurn$ as burn-in steps. To initialize the walkers, we sample from a Gaussian ball with $5\%$ relative standard
deviation around the nominal model rates. For the inference of the two-state model, we use $\GeneReducedWalkers$ walkers, $\GeneReducedSteps$ steps with $5000$
burn-in steps. We compute the sampled maximum a posteriori (MAP) estimate as the joint parameter tuple with the largest post-burn log posterior. The baseline fits in Figure~\ref{fig:ex2-hierarchy} show experimental means
and standard deviations, sampled MAP trajectories and 20 joint posterior
draws at every DNA concentration. Mean condition-wise MAP RMSEs are
$\GeneReducedRMSE$, $\GenePhenomenologicalRMSE$ and
$\GeneBaselineRMSE\,\mu\mathrm{M}$ for the high-level, phenomenological and
detailed models, respectively.

\begin{figure*}[!tb]
    \centering
    {\raggedright\normalfont High-level to phenomenological: $\mathrm{GP}_{\text{hp}}$\par\smallskip}
    \GeneHPGPTrainingScreening
    \par\medskip
    {\raggedright\normalfont High-level to mechanistic: $\mathrm{GP}_{\text{hm}}$\par\smallskip}
    \GeneGPTrainingScreening
    \caption{High-level to phenomenological (top) and mechanistic (bottom)
    learning and filtering with $\mathrm{GP}_{\text{hp}}$ and
    $\mathrm{GP}_{\text{hm}}$, respectively. In each row, (a) balanced
    test error compares GP-MPM and uniform random
    sampling on the same 200 independent test points through 200
    acquisitions. (b,c) Filtering at $\zeta=0.5$ and $\zeta=0.8$ on
    the same 1000 proposals for each model. The horizontal axis
    shows the GP mean of $\epsilon-d_{(\lceil\gamma M\rceil)}(\theta)$.
    Histogram weights are percentages of the original proposals.}
    \label{fig:hp-gp-diagnostics}
    \label{fig:ex2-gp-diagnostics}
\end{figure*}

\subsubsection{Behavioral mapping and GP-guided inference}
For behavioral comparisons, we draw $M=3200$ joint parameter tuples from the final $5000$ reduced model ensemble steps and simulate their GFP trajectories at all six DNA conditions. For $d(\cdot, \cdot)$, we use the
mean of the six condition-wise RMSEs as the trajectory discrepancy measure. For the hyperparameters, we choose
$\epsilon=0.5\,\mu\mathrm{M}$ (bounds the discrepancy between detailed and high-level model predictions), $\gamma=0.5$ (the minimum behavioral support required for a detailed parameter tuple) and $\zeta=0.8$ (the threshold for filtering MCMC proposals using the trained GP). For GP training, we use $\epsilon-d_{(\lceil\gamma M\rceil)}(\theta)$ as
the target, where $d_{(r)}(\theta)$ is the $r$-th smallest discrepancy
$d(y(\cdot;\theta),\hat y(\cdot;\hat\theta^{(j)}))$ over the reduced
posterior samples. This quantity is non-negative when
$\hat b_{\epsilon,M}(\theta)>\gamma$. Figure~\ref{fig:ex2-gp-diagnostics}(a) shows the
balanced errors at $\zeta=0.5$ through $200$ acquisitions on a common $200$-point
independent test set for each sampling strategy while training the GP. For $\mathrm{GP}_{\text{hp}}$, the final
balanced errors are $\GeneHPGPBalancedErrorPct\%$ for GP-MPM and
$\GeneHPRandomBalancedErrorPct\%$ for uniform random sampling. At
$\zeta=0.8$, $\mathrm{GP}_{\text{hp}}$ retains
$\GeneHPGPCompatibleRetainedPct\%$ of compatible test points and rejects
$\GeneHPGPIncompatibleRejectedPct\%$ of incompatible test points. For $\mathrm{GP}_{\text{hm}}$, the balanced errors are
$\GeneGPBalancedErrorPct\%$ for GP-MPM and
$\GeneRandomBalancedErrorPct\%$ for uniform random sampling. At
$\zeta=0.8$, $\mathrm{GP}_{\text{hm}}$ retains
$\GeneGPCompatibleRetainedPct\%$ of compatible test points and rejects
$\GeneGPIncompatibleRejectedPct\%$ of incompatible test points. At $\zeta=0.8$, for stricter filtering, $\mathrm{GP}_{\text{hp}}$ has a balanced error of $\GeneHPGPStrictBalancedErrorPct\%$ and $\mathrm{GP}_{\text{hm}}$ has a balanced error of $\GeneGPStrictBalancedErrorPct\%$. We use the trained GP as a filter for MCMC proposals before likelihood evaluation, see Figure~\ref{fig:ex2-gp-diagnostics}(b,c) for illustrations of the filtering at two different levels. The stricter filtering only accepts proposals with a high probability of being compatible. 

\subsubsection{Inferred trajectories and parameter distributions}
\begin{figure}[bt]
    \centering
    \GeneParameterDependence
    \caption{Selected parameter marginals and joint distributions in
    $\log_{10}$ coordinates. (a) Detailed baseline: \GeneDetailedPairFirst and \GeneDetailedPairSecond. (b) High-level model: \GeneReducedPairFirst \:
    and \GeneReducedPairSecond. (c) The detailed-model pair under GP-guided
    inference. Pearson correlations use the displayed posterior windows.}
    \label{fig:ex2-baseline-parameters}
\end{figure}
From the posterior distributions of the baseline and GP-guided inference of both the phenomenological and mechanistic models, we compute the mean condition-wise MAP RMSEs. For the phenomenological model, we get $\GeneHPBaselineRMSE\,\mu\mathrm{M}$ for the baseline and $\GeneHPGuidedRMSE\,\mu\mathrm{M}$ for the GP-guided inference. For the mechanistic model, we get $\GeneBaselineRMSE\,\mu\mathrm{M}$ and $\GeneGuidedRMSE\,\mu\mathrm{M}$, respectively. The corresponding predictions at each DNA concentration are shown in Figure~\ref{fig:ex2-guided-fits}(a,b), respectively. The posterior marginal distributions for the parameters are shown in Figure~\ref{fig:ex2-guided-marginals}, for both the phenomenological and the mechanistic models for each inferred parameter and in Figure~\ref{fig:ex2-baseline-parameters}, we show three
selected parameter marginals and their joint distributions for baseline, high-level and guided inference. These results show that the GP-guided inference retains similar parameter posterior distributions for some parameters, while for some others the posteriors are narrower. The inference also reduces the number of likelihood evaluations and the overall inference time.  

\begin{figure*}[!tb]
    \centering
    \begin{minipage}[t]{0.488\textwidth}
        \raggedright\normalfont (a) Phenomenological model\par\smallskip
        \GeneHPBaselineGuidedFits
    \end{minipage}\hfill
    \begin{minipage}[t]{0.488\textwidth}
        \raggedright\normalfont (b) Detailed mechanistic model\par\smallskip
        \GeneBaselineGuidedFits
    \end{minipage}
    \caption{Baseline and GP-guided GFP trajectories at six DNA
    concentrations for (a) the phenomenological model with
    $\mathrm{GP}_{\text{hp}}$ and (b) the detailed mechanistic model with
    $\mathrm{GP}_{\text{hm}}$. Each panel shows experimental means and SD,
    the sampled MAP and 20 joint posterior draws from each inference run.}
    \label{fig:hp-guided-fits}
    \label{fig:ex2-guided-fits}
\end{figure*}

\begin{figure}[t]
    \centering
    \GeneHierarchyMarginals
    \caption{Baseline and GP-guided marginal distributions for seven
    inferred parameters. Each group pairs the phenomenological
    model (left) with the mechanistic model (right). Blue and orange
    curves show baseline and guided densities, respectively, using the
    final 10000 ensemble steps of each run.}
    \label{fig:ex2-guided-marginals}
\end{figure}
\begin{table}[!htbp]
    \centering
    \small
    \caption{Baseline and GP-guided inference for both model levels.
    Times are in minutes. Runtime reductions use baseline MCMC elapsed
    time as the denominator. Autocorrelation times alone do not establish
    convergence for these models and are not reported here.}
    \label{tab:ex2-performance}
    \begin{tabular}{lrrrr}
\toprule
& \multicolumn{2}{c}{Phenomenological} & \multicolumn{2}{c}{Mechanistic}\\
Metric & Baseline & $\mathrm{GP}_{\text{hp}}$ & Baseline & $\mathrm{GP}_{\text{hm}}$\\
\midrule
Model MCMC (min) & \GeneHPBaselineMinutes & \GeneHPGuidedMinutes & \GeneBaselineMinutes & \GeneGuidedMinutes\\
MCMC time reduction (\%) & 0 & \GeneHPMCMCTimeReductionPct & 0 & \GeneMCMCTimeReductionPct\\
Mean MAP RMSE ($\mu$M) & \GeneHPBaselineRMSE & \GeneHPGuidedRMSE & \GeneBaselineRMSE & \GeneGuidedRMSE\\
\midrule
\multicolumn{5}{l}{One-time setup costs (min)}\\
Reduced inference & -- & \GeneHPReducedMinutes & -- & \GeneReducedMinutes\\
GP training & -- & \GeneHPGPTrainingMinutes & -- & \GeneGPTrainingMinutes\\
\bottomrule
\end{tabular}

\end{table}
\subsubsection{Computational performance}
Each inference runs on the UC Merced High Performance Computing (HPC) cluster with 32 Intel Xeon CPU workers and 16 GB memory. The phenomenological baseline MCMC
takes $\GeneHPBaselineMinutes$ minutes and its guided run takes
$\GeneHPGuidedMinutes$ minutes, a $\GeneHPMCMCTimeReductionPct\%$ reduction. We observe a much more significant speedup in the more complex mechanistic model inference. For this, 
MCMC takes $\GeneBaselineMinutes$ minutes and its guided run takes
$\GeneGuidedMinutes$ minutes, speeding up the inference by $\GeneMCMCTimeReductionPct\%$. The $\mathrm{GP}_{\text{hp}}$ filter rejects $\GeneHPGPProposalRejectedPct\%$
of proposals that pass the ordinary priors and takes
$\GeneHPGPScreeningMinutes$ minutes. The corresponding values for
$\mathrm{GP}_{\text{hm}}$ are $\GeneGPProposalRejectedPct\%$ and
$\GeneGPScreeningMinutes$ minutes. Reduced model inference and GP training are one-time setup costs, taking $\GeneReducedMinutes$ minutes for the reduced model and $\GeneGPTrainingMinutes$ minutes for the GP training.
Table~\ref{tab:ex2-performance} reports MCMC times, percentage time
reductions, fit errors and the separate one-time setup costs.

\FloatBarrier

\section{Discussion}
\label{sec:discussion}
One of the notable aspects of our method is its ability to speed up the inference process in non-identifiable models, where standard approaches tend to spend the computational budget on exploring ranges of parameters that
produce similar measured trajectories. Exploring distant parts of these
families can require many detailed simulations but with similar predictive
results. The behavioral GP addresses this by identifying a region compatible with the higher-level posterior and focuses detailed inference there. Although this comes at a cost: the identified parameter combinations may not be implementable. Therefore, it is important to train the GP using a region of parameters that the designer considers relevant. In the
gene expression example, we show this by
allocating the computation to explore parameter combinations that realize the behavior of interest in a region of interest. Indeed, in Figure~\ref{fig:ex2-guided-marginals} we show that behavioral filtering concentrates some marginals while others remain broad. In the mechanistic model inference, the guided marginals of $k_{tx}$ and
$k_{u,tl}$ are narrower, while those of $k_{b,tx}$ and $k_{u,tx}$ retain
broad ranges. The strong positive association between $k_{tl}$ and
$k_{dil}$ persists in the baseline and guided chains. Clearly, in the GP-guided inference we have an exploration-speedup tradeoff: the GP filter concentrates the inference on a smaller region of parameter space, which speeds up the inference, but it also reduces exploration of the full range of compatible parameters. In addition to designing the compatibility criterion, our framework allows the user to choose various tolerances: the compatibility tolerance $\epsilon$ between the detailed and high-level model predictions, the required high-level posterior compatibility mass, $\gamma$, and the strictness of the GP filtering, $\zeta$. So these choices give the user control
over the balance between parameter exploration and behavioral focus.

Since our approach does not require any parameter or state correspondence, we are able to explore the parameter inference problem through successive model levels in a hierarchy. The key requirement is that all models in the hierarchy share a common measured output, which provides a route for transferring information through an extended modeling hierarchy by building approximate simulation relations between these models. A detailed model posterior, therefore, can aid the inference of an even more
detailed model. Future work will develop this procedure across successive levels, introducing additional mechanisms and parameters at each level and learning a new behavioral mapping between the chosen models. This
would allow information obtained at tractable levels to guide the
identification of increasingly detailed models. As simulation costs grow,
avoiding repeated detailed evaluations becomes increasingly valuable.
The approach offers a route toward parameter inference for models whose
computational demands currently limit their practical use.

\section{Conclusion}    
We presented a behavioral mapping framework for accelerating Bayesian inference across hierarchical dynamical models without requiring parameter correspondence. The method transfers information through posterior behavioral compatibility and a learned Gaussian Process (GP) support level set. For nonlinear gene expression, the application connects a two-state high-level model, a nine-state phenomenological model, and a 38-state mechanistic model. Behavioral guidance reduces MCMC time by \GeneHPMCMCTimeReductionPct\% for the phenomenological model and \GeneMCMCTimeReductionPct\% for the mechanistic model. The presented framework formalizes behavioral compatibility between hierarchies of models and is readily applicable for many other systems.

The code for both applications is available at
\url{https://github.com/ctsd-lab/hierarch-inf}.
\section*{Acknowledgments}
Part of this research was conducted using the Pinnacles cluster, which is centrally funded by the University of California, Merced, and maintained by the Cyberinfrastructure and Research Technologies (CIRT) team at UC Merced.
\subsection*{AI Use Statement}
The authors used several AI tools for brainstorming, assistance with the computational implementations, data visualizations, and  writing of the manuscript. AI generated outputs were reviewed, vetted, and edited by the authors for accuracy, and the authors take full responsibility for the content of the manuscript.  
\bibliography{refs.bib}
\bibliographystyle{IEEEtran}

\end{document}